\newtheorem{theorem}{Theorem}[section]
\newenvironment{proof}[1][Proof]{\begin{trivlist}\item[\hskip \labelsep {\bfseries #1}]}{\end{trivlist}}
\newtheorem{lemma}[theorem]{Lemma}
\newcommand{\conv}[1]{\overset{#1}\rightarrow}
\newcommand{\G}{\mathbb{G}}
\newcommand{\R}{\mathbb{R}}
\newcommand{\compcent}[1]{\vcenter{\hbox{$#1\circ$}}}
\newcommand{\comp}{\mathbin{\mathchoice
  {\compcent\scriptstyle}{\compcent\scriptstyle}
  {\compcent\scriptscriptstyle}{\compcent\scriptscriptstyle}}}
\def\spacingset#1{\renewcommand{\baselinestretch}%
{#1}\small\normalsize} \spacingset{1}
\begin{document}


    \title{\mbox{Estimating historic movement of a climatological variable}\\from a pair of misaligned data sets\footnote{\textcolor{blue}{Technical report no. ASU/2017/17}; Indian Statistical Institute, Kolkata, India;}}

    \author{Dibyendu Bhaumik \thanks{Corresponding Author; Assistant Adviser, Department of Statistics and Information Management, Reserve Bank of India, C9, 6th Floor, Bandra Kurla Complex, Bandra (East), Mumbai 400051, Maharashtra, India (e-mail: dbhaumik@rbi.org.in).} \and and Debasis Sengupta \thanks{Professor, Applied Statistics Unit, Indian Statistical Institute, 203, Barrackpore Trunk Road, Kolkata 700108, West Bengal, India (e-mail: sdebasis@isical.ac.in).}}

    \date{}
    \maketitle

    \begin{abstract}
        We consider in this paper the problem of estimating the mean function from a pair of paleoclimatic functional data sets, after one of them has been registered with the other. We show theoretically that registering one data set with respect to the other is the right way to formulate this problem, which is in contrast with estimation of the mean function in a ``neutral'' time scale that is preferred in the analysis of multiple sets of longitudinal growth data. Once this registration is done, the Nadaraya-Watson estimator of the mean function may be computed from the pooled data. We show that, if a consistent estimator of the time transformation is used for this registration, the above estimator of the mean function would be consistent under a few additional conditions. We study the potential change in asymptotic mean squared error of the estimator that may be possible because of the contribution of the time-transformed data set. After demonstrating through simulation that the additional data can lead to improved estimation in spite of estimation error in registration, we estimate the mean function of three pairs of paleoclimatic data sets. The analysis reveals some interesting aspects of the data sets and the estimation problem.
    \end{abstract}

    {\bf Keywords: }Consistency, Functional data, Ice core data, Nadaraya-Watson estimator, Curve Registration, Structural Averaging

\section{Introduction}\label{sec:intro}

    Paleoclimatic data on movement of atmospheric concentration of carbon dioxide with time, derived from ice-cores drilled at Lake Vostok and EPICA (The European Project for Ice Coring in Antarctica ) Dome C of Antarctica \citep{Petit_et_al_1999, Luthi_et_al_2008}, show remarkable similarity (see Fig.~\ref{fig:co2_epica_vostok}). The ups and downs of these curves are linked with different phases of the Earth's paleoclimatic history. A more precise description of this movement should be possible by pooling of the two data sets for a combined estimate. However, due to distortion of the time scales arising from errors in radio isotope dating, the two data sets need to be aligned before they are pooled. Numerous techniques for registration are available in the literature, including shape invariant model based registration \citep{Lawton_Sylvestre_Maggio_1972, Kneip_Gasser_1988, Kneip_Engel_1995, Brumback_Lindstrom_2004}, functional principal component based registration \citep{Silverman_1995, Kneip_Ramsay_2008}, dynamic time warping \citep{Wang_Gasser_1997, Wang_Gasser_1999}, continuous monotone registration \citep{Ramsay_Li_1998}, registration by local regression \citep{Kneip_et_al_2000}, maximum likelihood registration through parametric modeling of time transformation \citep{Ronn_2001, Gervini_Gasser_2005}, self-modelling registration \citep{Gervini_Gasser_2004}, functional convex synchronization model based registration \citep{Liu_Muller_2004}, pair-wise curve synchronization \citep{Tang_Muller_2008}, kernel-matched registration \citep{Bhaumik_Srivastava_Sengupta_2017} and so on. It may appear that estimation of the mean function is a rather trivial job, once the data have been aligned by one of the above techniques. We would show in this paper that there are some problems with the conventional wisdom in this matter and set up a clear path to consistent estimation.
    \begin{figure}[h!]
        \begin{center}
            \includegraphics{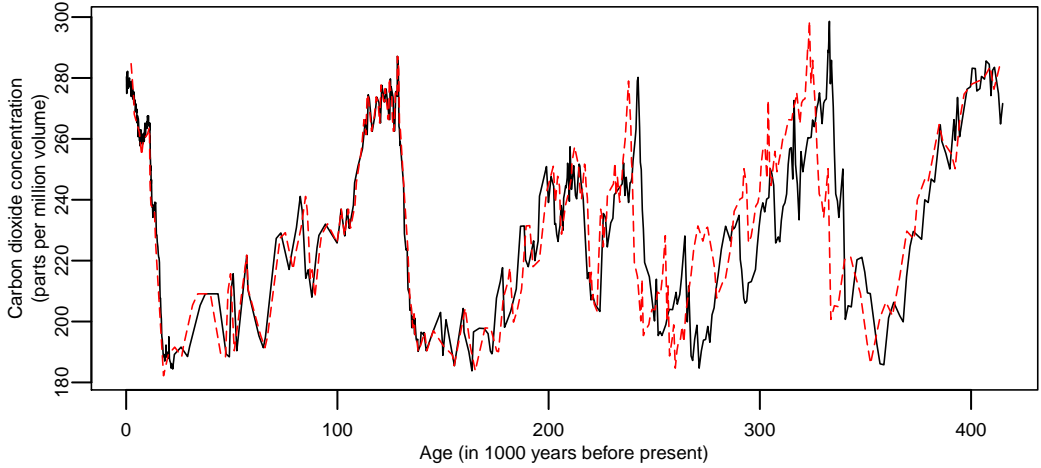}
            \caption{\label{fig:co2_epica_vostok}Atmospheric concentration of carbon dioxide derived from ice-cores at EPICA Dome C (solid line) and Lake Vostok (dashes) of Antarctica}
        \end{center}
    \end{figure}


    A similar problem in the literature of growth curves has been solved by what is known as {\it structural averaging}. Suppose there are $k$ sets of functional data $\{(t_{ij},y_{ij}); j=1\ldots n_i\},\;i=1\ldots k$ under the signal-plus-noise model,
    \begin{equation}
        y_{ij} = \mu(g_i(t_{ij}))+\epsilon_{ij},\quad j=1,\ldots,n_i,\ i=1,\ldots,k,\label{eq:other_model}
    \end{equation}
    where $\mu$ is the function of interest, $g_1,\ldots,g_k$ are the time-warping functions for different individuals $i=1,\ldots,k$, and $\epsilon_{ij}$'s are additive errors. In the structural averaging approach, the obvious non-identifiability of the functions $\mu$ and $g_1,\ldots,g_k$ is resolved by imposing a symmetric and additive constraint such as the average of the $g_i$'s \citep{Wang_Gasser_1997}, or the average of the $g_i^{-1}$'s \citep{Gervini_Gasser_2004}, is the identity map. Once these functions are estimated under the symmetric constraint, they are used to bring the data to a common time-scale, so that the `central' function $\mu$ can be estimated from the pooled data. Functional data on many individuals is expected to produce a good estimator of the `central' function.

    In growth models, one typically allows individual-specific multipliers to the function $\mu$ \citep{Gervini_Gasser_2004}, or even more general variations in $\mu$ \citep{Lawton_Sylvestre_Maggio_1972, Kneip_Gasser_1988, Kneip_Engel_1995}, but that would be unnecessary for the paleoclimatic problem mentioned above. For the latter problem, replication is also difficult to obtain -- not only for economic reasons, but also because of provisions of international treaties that prohibit intrusion in an ecologically sensitive area \citep{Watts_1992}. Thus, one cannot expect $k$ to be more than two or three.

    In particular, when $k=2$, a neutral time-scale for expressing a pooled estimate of the common function may be unnecessary. Instead, one can use the model
    \begin{eqnarray}
        y_{1j}&=&m(t_j)+\epsilon_{1j}\quad j=1,\ldots,n_1,\notag\\
        y_{2j}&=&m(g_0(s_j))+\epsilon_{2j}\quad j=1,\ldots,n_2,\label{eq:the_model}
    \end{eqnarray}
    where $m$ is the underlying mean function, expressed in the time-scale of the first data set, and $g_0$ is the time transformation function that warps the time-scale of the second data set into that of the first. The task of estimating the function $m$ from the above model may appear to be a special case of an already solved problem, since \eqref{eq:the_model} is apparently a simplified form of \eqref{eq:other_model}, under the asymmetric constraint $g_1(t)=t$. Moreover, this looks like a poorer formulation of the registration problem, as there is no reason why the time-scale of one data set should be preferred over that of the other one. However, we show in Section~\ref{sec:constraint_of_symmetry} that for $k=2$, the model~\eqref{eq:other_model} under a symmetric and additive constraint is more restrictive than the model \eqref{eq:the_model}. In particular, \eqref{eq:other_model} implies \eqref{eq:the_model} with $m=\mu\circ g_1$ and $g_0=g_1^{-1}\circ g_2$, but \eqref{eq:the_model} does not imply the existence of a pair of functions $g_1$ and $g_2$, satisfying a symmetric and additive constraint, so that \eqref{eq:other_model} may hold.

    In view of this annoying result, use of structural averaging in the present case (i.e., when $k=2$) is seen to be laden with an unnecessary restriction. If this restriction has to be avoided, the only available choice seems to be to warp the time-scales of any one of the data sets for aligning it with the other. Once this choice is made, the common function may be estimated through kernel smoothing of the pooled data, after substituting the function $g_0$ with a consistent estimator. We show in this paper that, under appropriate conditions, the resulting estimator of $m$ would be consistent.

\section{Limitations of the Constraint of Symmetry}\label{sec:constraint_of_symmetry}

    Suppose the model~\eqref{eq:other_model} for $k=2$ holds for some location function $\mu$ and strictly increasing time-warping functions $g_1$ and $g_2$ such that $\frac12(g_1(t)+g_2(t))=t$. This model can be easily expressed as the model~\eqref{eq:the_model} with $m(t)=\mu\comp g_1(t)$ and $g_0(t)=g_1^{-1}\comp g_2(t)$. The same conclusion holds when the constraint is $\frac12(g_1^{-1}(t)+g_2^{-1}(t))=t$.

    Now suppose we have two sets of functional data satisfying the model~\eqref{eq:the_model}, for some continuous and strictly increasing function $g_0$. We have to determine whether the data sets can also be said to follow the model~\eqref{eq:other_model} for some functions $g_1$ and $g_2$ such that $\frac12(g_1(t)+g_2(t))=t$, and if so, whether they are unique. We start by showing that if there is a suitable pair of warping functions $g_1$ and $g_2$, there can be no other choice.

    \begin{theorem}  \label{thm:uniqueness}
        Suppose $g_0$ is a continuous and strictly increasing function with domain $[c,d]\;(for\; c < d)$ and range $[a,b]\;(for\; a < b)$ such that $[a,b]\cap[c,d]$ includes a non-empty open interval, and $g_0(t)=t$ for some $t$ in $[a, b] \cap [c, d]$. If there exists an interval $[\alpha,\beta]\subseteq[a, b] \cap [c, d]$ and a pair of continuous and strictly increasing functions $g_1:\,[a,b]\rightarrow[\alpha,\beta]$ and $g_2:\,[c,d]\rightarrow[\alpha,\beta]$ such that for all $t\in [a, b] \cap [c, d]$,
        \begin{enumerate}
            \item[(a)] $g_0(t)=g_1^{-1}\circ g_2(t)$,
            \item[(b)] $\frac12(g_1(t)+g_2(t))=t$
        \end{enumerate}
          then the pair of functions $g_1$ and $g_2$ is unique.
    \end{theorem}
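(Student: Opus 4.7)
The strategy is to collapse the two constraints into a single functional equation for $g_1$, pin $g_1$ down at the fixed point $t_0$, and then rule out any competitor by a sign-flipping-plus-contraction argument driven by that fixed point.

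First, I would combine (a) and (b). Rewriting (a) as $g_1 \circ g_0 = g_2$ and substituting into (b) yields
\begin{equation*}
g_1(t) + g_1(g_0(t)) = 2t, \qquad t \in [a,b] \cap [c,d]. \tag{$\ast$}
\end{equation*}
Evaluating $(\ast)$ at $t_0$ gives $2g_1(t_0) = 2t_0$, so $g_1(t_0) = t_0$ for every admissible $g_1$, and (b) then forces $g_2(t_0) = t_0$. Once $g_1$ is known on the intersection, $g_2$ is determined there by (b), and since $g_2$ is a strictly increasing continuous function with prescribed range $[\alpha,\beta]$ the pair is pinned down globally; hence uniqueness reduces to uniqueness of $g_1$.

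Second, I would suppose $g_1,\tilde g_1$ both satisfy $(\ast)$ and set $h = g_1 - \tilde g_1$. Subtracting gives $h(g_0(t)) + h(t) = 0$ on the intersection, with $h(t_0) = 0$. Iterating, $h(g_0^{(k)}(t)) = (-1)^k h(t)$ as long as the orbit stays in $[a,b]\cap[c,d]$, and the analogous identity $h \circ g_0^{-1} = -h$ holds by substituting $g_0^{-1}(t)$ for $t$. Because $g_0$ is a strictly increasing homeomorphism with $t_0$ in the interior of the intersection, on each side of $t_0$ the quantity $g_0(t) - t$ has constant sign in some one-sided neighborhood; whichever of $g_0$ and $g_0^{-1}$ moves such points toward $t_0$ is then a monotone local contraction that keeps orbits inside the intersection. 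By continuity, $h(g_0^{(k)}(t)) \to h(t_0) = 0$, while the sign-flipping forces $(-1)^k h(t) \to 0$, so $h \equiv 0$ on an open neighborhood $U$ of $t_0$.

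Finally, I would propagate $h \equiv 0$ from $U$ to the whole intersection. The zero set of $h$ is both forward- and backward-invariant under $g_0$ by $h(g_0(t)) = -h(t)$, and any further fixed point of $g_0$ in the intersection is itself a zero of $h$ by the same two-line argument that worked at $t_0$. Between consecutive fixed points $g_0(t) - t$ has constant sign, so each orbit (forward or backward) either terminates at such a fixed point or enters $U$, and continuity of $h$ spreads the vanishing to the entire intersection. This gives $g_1 \equiv \tilde g_1$ and hence $g_2 \equiv \tilde g_2$. I expect the main obstacle to lie in this orbit analysis: one must split into cases according to the sign of $g_0(t) - t$ on each side of $t_0$, handle any additional fixed points of $g_0$, and verify at every step that the chosen iteration does not leave $[a,b]\cap[c,d]$.
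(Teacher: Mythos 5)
Your proposal is correct and follows essentially the same route as the paper: both reduce conditions (a) and (b) to the functional equation $g_1(t)+g_1(g_0(t))=2t$, take the difference $h$ of two candidate solutions so that $h\circ g_0=-h$, and then use the monotone convergence of $g_0$- or $g_0^{-1}$-orbits to a fixed point of $g_0$ (where $h$ must vanish) together with continuity and the alternating signs $h(g_0^{(k)}(t))=(-1)^k h(t)$ to force $h\equiv 0$. The paper organizes this as a direct contradiction from an arbitrary point where $h\neq 0$, splitting into cases by which side of the orbit a fixed point lies on, rather than first localizing at $t_0$ and propagating, but the mechanism is identical.
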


    The next theorem shows that, for a given $g_0$, existence of a pair of functions $g_1$ and $g_2$ satisfying conditions (a) and (b) of Theorem~\ref{thm:uniqueness} is not guaranteed.

    \begin{theorem}\label{thm:existence}
        Suppose
        \begin{displaymath}
            g_{0}(t) =
                \left\{
                    \begin{array}{rcl}
                        \frac{1-r(1-t_0)}{t_0}t, &  & t\in [0, t_0) \\
                        1-r(1-t), &  & t\in [t_0, 1].
                    \end{array}
                \right.\notag
        \end{displaymath}
        Then there is no pair of continuous and strictly increasing functions $g_1:\,[0,1]\rightarrow[0,1]$ and $g_2:\,[0,1]\rightarrow[0,1]$ such that the conditions (a)~$g_0(t)=g_1^{-1} \circ g_2(t)$ and (b)~$\frac12(g_1(t)+g_2(t))=t$ hold for all $t$ in $[0, 1]$.
    \end{theorem}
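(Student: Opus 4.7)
The plan is to reduce the two conditions (a) and (b) to a single functional equation for $g_1$, and then exploit the piecewise linear structure of $g_0$ to derive two incompatible formulas for $g_1(t_0)$. Combining (a) with (b), which gives $g_2(t)=2t-g_1(t)$, yields
\begin{equation*}
g_1(g_0(t)) + g_1(t) = 2t, \qquad t\in[0,1].
\end{equation*}
The strict monotonicity of $g_1,g_2$ together with $g_1([0,1]),g_2([0,1])\subseteq[0,1]$ and (b) evaluated at the endpoints forces $g_1(0)=0$ and $g_1(1)=1$.

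Next I would exploit the branch $g_0(t)=1-r(1-t)$ on $[t_0,1]$, whose only fixed point there is $t=1$. Setting $t_n = 1 - r^n(1-t_0)$ (generated forward when $r<1$ and backward when $r>1$) produces an orbit contained in $[t_0,1]$ and converging to $1$, so continuity gives $g_1(t_n)\to g_1(1)=1$. Evaluating the functional equation at consecutive pairs $(t_n,t_{n+1})$ and writing $a_n:=g_1(t_n)-t_n$, one obtains an affine recurrence $a_{n+1}=-a_n+(t_n-t_{n+1})$. Its sign-alternating explicit solution contains a $(-1)^n a_0$ homogeneous term, and the decay requirement $a_n\to 0$ pins down $a_0$ uniquely, yielding
\begin{equation*}
g_1(t_0) = t_0 - \frac{(1-r)(1-t_0)}{1+r}.
\end{equation*}

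An entirely symmetric argument on the other branch $g_0(t)=kt$ over $[0,t_0]$, with $k=(1-r(1-t_0))/t_0$, produces a second determination of the same quantity. Using the orbit $s_n=k^n t_0$ (or $s_n=k^{-n}t_0$ when $k>1$), which shrinks to $0$ while staying in $[0,t_0]$, the condition $g_1(s_n)\to g_1(0)=0$ applied to the analogous affine recurrence forces
\begin{equation*}
g_1(t_0) = \frac{2t_0}{1+k}.
\end{equation*}
A short algebraic comparison of these two expressions reduces to $(r-1)^2(1-t_0)=0$, which holds only in the trivial case $r=1$ (equivalently, $g_0$ is the identity); whenever $g_0$ actually has a kink at $t_0$, the two values disagree, giving the desired contradiction.

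The main obstacle I anticipate is the iteration bookkeeping: choosing the direction on each branch so that every iterate remains in the linear piece of $g_0$ to which the corresponding formula applies, and justifying that the affine recurrence admits a unique solution that decays to $0$ (which is what rules out any freedom in $a_0$). The final algebraic reduction to $(r-1)^2(1-t_0)=0$ is a routine computation and is not expected to present any difficulty.
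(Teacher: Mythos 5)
Your proposal is correct, and its core mechanism is the same as the paper's: iterate the functional identity $g_1\circ g_0(t)+g_1(t)=2t$ along an orbit of $g_0$ converging to a fixed point, and use continuity of $g_1$ there to kill the $(-1)^n$ oscillation and pin down $g_1(t_0)$. The difference is in how the two determinations of $g_1(t_0)$ are obtained. The paper handles the branch $[t_0,1]$ by exhibiting the explicit affine pair $g_1(t)=1-\tfrac{2(1-t)}{1+r}$, $g_2(t)=1-\tfrac{2r(1-t)}{1+r}$ and invoking Theorem~\ref{thm:uniqueness} for uniqueness there, and then runs the orbit argument only on $[0,t_0)$; you run the orbit argument symmetrically on both branches, with the endpoint values $g_1(0)=0$, $g_1(1)=1$ obtained directly from constraint (b) and the range restriction. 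Your right-branch value $g_1(t_0)=t_0-\tfrac{(1-r)(1-t_0)}{1+r}$ agrees exactly with the paper's explicit formula. On the left branch, your value $g_1(t_0)=\tfrac{2t_0}{1+k}$ is the correct one (it is the value of the unique affine solution $g_1(t)=\tfrac{2}{1+k}t$ of $g_1(kt)+g_1(t)=2t$ vanishing at $0$, and a direct numerical check, e.g.\ $t_0=\tfrac12$, $r=\tfrac12$, confirms it), whereas the paper's $\tfrac{2t_0 v}{1+v}$ carries an index slip in the orbit ($t_n=v^nt_0$ written where $t_n=v^{-n}t_0$ is meant, given $t_n=g_0^{-1}(t_{n-1})$). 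As a result your consistency condition collapses cleanly to $(1-r)^2(1-t_0)=0$, leaving only the trivial root $r=1$, while the paper's quadratic produces a spurious second root $\tfrac{1+2t_0}{1-2t_0}$ that it must separately discard; the theorem's conclusion is unaffected, but your route is both correct and tidier. The one thing to be careful about when writing this up is the bookkeeping you already flagged: on each branch you must choose the forward or backward orbit according to whether the relevant slope exceeds $1$, so that all iterates stay inside that linear piece, and you must note that the inhomogeneous affine recurrence $a_{n+1}=-a_n+c\rho^n$ has general solution $(-1)^n\bigl(a_0-\tfrac{c\rho}{1+\rho}\bigr)+O(\rho^n)$ (up to the orientation convention), so decay of $a_n$ forces a unique $a_0$.
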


    The preceding theorem can be readily generalized to a piecewise linear map $g_0$ from domain $[c, d]$ to $[c, d]$, for any $c$ and $d$ with $c<d$. Further, for any $g_0(t)$ that intersects $t$ a few times, any candidate pair of functions $g_1$ and $g_2$ have to pass through these points of intersection. Therefore, the issue of existence of $g_1$ and $g_2$ has to be addressed separately for each interval in between consecutive crossings. If $g_0$ is piecewise linear, consisting of two lines in between any pair consecutive crossings, Theorem~\ref{thm:existence} implies that there can be no suitable $g_1$ and $g_2$ for that segment.

    Thus, existence of appropriate $g_1$ and $g_2$ is ruled out for a vast class of functions $g_0$. In other words, Theorem~\ref{thm:existence} indicates a general weakness of the model~\eqref{eq:other_model} under the symmetric constraint $\frac12(g_1(t)+g_2(t))=t$, rather than existence of a pathological counterexample.

    The model~\eqref{eq:other_model} has the same weakness under the alternative symmetric constraint $\frac12(g_1^{-1}(t)+g_2^{-1}(t))=t$. This is because Theorem~\ref{thm:existence} can be easily adapted to the replacement of condition~(b) by this constraint.

    There is a peculiar consequence of this limitation to inference. If the model~\eqref{eq:the_model} holds with some $g_0$ that happens to be piecewise linear (with two pieces) in between two crossings of $g_0(t)$ with $t$, then it would be impossible to get any pair of $g_1$ and $g_2$ under model~\eqref{eq:other_model} (with a symmetric and additive constraint) that would be commensurate with that $g_0$. In such a case, if one incorrectly assumes model~\eqref{eq:other_model} but `estimates' $g_1$ and $g_2$ under a symmetric and additive constraint, then the implied estimator of $g_0$, obtained by substitution of these estimators in $g_1^{-1}\comp g_2$, would have an unduly restrictive form. Since the true function $g_0$ does not have this form, the estimator would be biased and inconsistent.

    In summary, insistence on a neutral time scale for expressing the main function of interest can bring in an unnecessary constraint on the underlying warping function that relates the time scale of one data set to that of the other. It can also lead to avoidable bias in an estimator of this function. Given these limitations of the model~\eqref{eq:other_model}, it makes sense for us to bypass it and instead focus on the model~\eqref{eq:the_model}.

    \section{Estimation of Mean Function}\label{sec:est_of_mean_fn}

   Let us define, for a given time transformation~$g$, the real-valued function $m_{n,g}$ on the real line by
    \begin{equation}\label{eq:nw_type_functional}
        \displaystyle m_{n,g}(t)=
        \frac
        {
        \displaystyle\frac{1}{nh}\left\{\displaystyle\sum_{i=1}^{n_1}K\left(\frac{t-t_i}{h_n}\right) y_{1i}+\displaystyle\sum_{j=1}^{n_2}K\left(\frac{t-g(s_j)}{h_n}\right)y_{2j}\right\}
        }
        {
        \displaystyle\frac{1}{nh}\left\{\displaystyle \sum_{i=1}^{n_1}K\left(\frac{t-t_i}{h_n}\right) +\displaystyle\sum_{j=1}^{n_2}K\left(\frac{t-g(s_j)}{h_n}\right)\right\}
        },
    \end{equation}
    where $K$ is a kernel function, $h_n$ is the kernel bandwidth and $n=n_1+n_2$. Note that $m_{n,g_0}(t)$ is in fact the Nadaraya-Watson regression estimator of $m(t)$ based on the pooled data set, where time-values of the second data set are correctly re-warped by the transformation~$g_0$. As $g_0$ is not known, a natural estimator of $m(t)$ would be
    \begin{equation} \label{eq:nw_type_estimator}
        m_{n,\hat{g}_n}(t)=m_{n,g}(t)\big|_{g=\hat{g}_n},
    \end{equation}
    where $\hat{g}_n$ is an estimator of $g_0$.
    We establish in Section~\ref{sec:large_sample} that if $\hat{g}_n$ is consistent, then $m_{n,\hat{g}_n}$ would be consistent too, under appropriate conditions. We use the hypothetical estimator~$m_{n,g_0}$ to investigate the extent to which the performance of the Nadaraya-Watson estimator based on the first data set can be improved by making use of the second data set. In Section~\ref{sec:practical}, we suggest ways of computing the standard error of $m_{n,\hat{g}_n}$ and checking whether the second data set improves estimation. We study the performance of $m_{n,\hat{g}_n}$ through simulations in Section~\ref{sec:sim}, with $m_{n,g_0}$ and the Nadaraya-Watson estimator based on the first data set used as benchmarks. We analyse three sets of paleoclimatic data in Section~\ref{sec:data_anal} and provide some concluding remarks in Section~\ref{sec:discussion}.

\section{Large sample results}\label{sec:large_sample}

\subsection{Consistency of the proposed estimator}\label{subsec:con_res}

    Let the errors $\{\epsilon_{1i};i=1,\ldots n_1\}$ and $\{\epsilon_{2j};j=1,\ldots n_2\}$ and the time points $\{t_i;i=1,\ldots n_1\}$ and $\{s_j;j=1,\ldots n_2\}$ in model~\eqref{eq:the_model} be mutually independent sets of samples from the probability density functions $f_{\epsilon_1}$, $f_{\epsilon_2}$, $f_1$ and $f_2$ having supports over $(-\infty, \infty)$, $(-\infty, \infty)$, $[a,b]$ (for $a<b$) and $[c,d]$ (for $c<d$), respectively. Let $f_{\epsilon_1}$ and $f_{\epsilon_2}$ have mean 0 and variances $\sigma_{\epsilon_1}^2$ and $\sigma_{\epsilon_2}^2$, respectively.

    Suppose we have two sets of data following the model~\eqref{eq:the_model}. It is possible that some points in the second data set are mapped by $g_0$ beyond the interval $[a,b]$. Since the first data set contains no information in this region, there is no basis for guessing the function $g_0$ there. In other words, $g_0$ can be meaningfully estimated only where it takes values in the interval $[a,b]$. Therefore, we would seek to establish consistency of the estimator $m_{n,\hat{g}_n}$ only over the interior of the interval $[a,b]$.

    We can treat the right hand side of \eqref{eq:nw_type_functional} as a functional of $g$ for given $t$, and denote it by $M_{n,t}(g)$. Our first Theorem states the point-wise convergence of the functional $M_{n,t}$ on a suitable class of functions~$\G$.

    \begin{theorem}\label{thm:pointwise_conv}
        Suppose the following assumptions hold in respect of the model~\eqref{eq:the_model} and the functional $M_{n,t}$ defined above.
        \begin{description}
            \item [A1] The densities $f_1$ and $f_2$ are continuous, bounded and positive valued over the interior of their support.
            \item [A2] The mean function~$m:\R\rightarrow\R$ is continuous and bounded.
            \item [A3] The map $g_0:\R\rightarrow\R$ is strictly increasing and continuously differentiable.
            \item [A4] The kernel $K$ is a continuous and bounded probability density function, which is symmetric about zero.
            \item [A5] $h_n\rightarrow 0$, $nh_n\rightarrow \infty$, and $n_1/n\rightarrow~\xi$ as $n\rightarrow~\infty$ where $\xi\in(0,1)$.
        \end{description}
         Let $\G$ be the class of all strictly increasing and continuously differentiable functions~$g$ defined over~$\R$. Then, for any function $g\in\G$ and $t\in(a,b)$, $M_{n,t}(g)\conv{P}M_t(g)$ as $n\rightarrow\infty$, where
        \begin{eqnarray}\label{eq:M}
            M_t(g)=\frac{\xi m(t)f_1(t)+(1-\xi)m\circ g_0\circ g^{-1}(t)f_2\circ g^{-1}(t)(g^{-1})'(t)}{f_g(t)},
        \end{eqnarray}
        and $f_g(t)= \xi f_1(t)+(1-\xi)f_2\circ g^{-1}(t)(g^{-1})'(t)$.
    \end{theorem}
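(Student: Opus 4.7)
The plan is to follow the standard Nadaraya–Watson consistency template: write $M_{n,t}(g)$ as a ratio of two averages, show that the numerator and denominator each converge in probability to the corresponding pieces of $M_t(g)$, and conclude via Slutsky's theorem. Positivity of the limiting denominator $f_g(t)$ on $(a,b)$ follows from Assumption~A1 together with the strict monotonicity in A3, so the ratio is well-defined in the limit.

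For the denominator, I would split it into the two kernel sums. The first, $(n_1/n)\cdot\frac{1}{n_1h_n}\sum_i K((t-t_i)/h_n)$, is a classical kernel density estimator for $f_1$ scaled by $n_1/n$. A change of variable $u\mapsto (t-u)/h_n$ in the integral, combined with the continuity and boundedness of $f_1$ (A1), the kernel properties (A4), and the bandwidth conditions (A5), gives that its expectation tends to $\xi f_1(t)$, while a second-moment computation yields a variance of order $1/(nh_n)\to 0$. So Chebyshev gives convergence in probability to $\xi f_1(t)$. The second sum requires one extra step: because $g$ is strictly increasing and continuously differentiable (A3), the variable $g(s_j)$ has density $f_2\circ g^{-1}(\cdot)\,(g^{-1})'(\cdot)$ on $g([c,d])$. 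Applying the same kernel-density argument to this transformed density yields convergence in probability to $(1-\xi)\,f_2\circ g^{-1}(t)\,(g^{-1})'(t)$. Summing the two pieces gives $f_g(t)$.

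For the numerator, I would substitute the model equations $y_{1i}=m(t_i)+\epsilon_{1i}$ and $y_{2j}=m\circ g_0(s_j)+\epsilon_{2j}$ to decompose it into a signal part and an error part. The signal part is handled exactly as in the denominator, but with integrands $K((t-u)/h_n)m(u)f_1(u)$ and $K((t-g(s))/h_n)m\circ g_0(s)f_2(s)$ respectively; boundedness and continuity of $m$ (A2) together with the change of variable in the second integral give expectations tending to $\xi m(t)f_1(t)$ and $(1-\xi)\,m\circ g_0\circ g^{-1}(t)\,f_2\circ g^{-1}(t)\,(g^{-1})'(t)$, with variances again of order $1/(nh_n)$. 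For the error part, the summands have mean zero by the assumption $E[\epsilon_{1i}]=E[\epsilon_{2j}]=0$, and independence across $i$ (resp.\ $j$) together with the boundedness of $K$ gives variance $O(1/(nh_n))$, so they vanish in probability by Chebyshev.

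The main obstacle, modest as it is, lies in the bookkeeping around the transformed density $f_2\circ g^{-1}\,(g^{-1})'$: one needs $t$ in the interior $(a,b)$ and $h_n$ small enough that the kernel window around $t$ sits inside $g([c,d])$ whenever that intersection is relevant, so that A1 gives continuity and positivity at the point where the density is evaluated. Once these routine details are in place, assembling the four limits above and invoking Slutsky's theorem — using that $f_g(t)>\xi f_1(t)>0$ on $(a,b)$ — delivers $M_{n,t}(g)\overset{P}{\to}M_t(g)$.
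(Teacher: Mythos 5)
Your proposal is correct and follows essentially the same route as the paper's proof: decompose $M_{n,t}(g)$ into numerator and denominator, show each converges in probability via a change of variables plus dominated convergence for the expectation and an $O(1/(nh_n))$ variance bound with Chebyshev, then conclude by Slutsky/continuous mapping using $f_g(t)\ge\xi f_1(t)>0$. The only cosmetic difference is that you split the numerator into signal and error parts, whereas the paper absorbs the error into the conditional second moment $E(Y_s^2\mid s)=m^2(s)+\sigma_{\epsilon}^2$ when expanding $N_{n,t}^2(g)$ directly; the computations are equivalent.
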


    Since $M_t(g_0)=m(t)$, it looks plausible that the above theorem would lead to the consistency of a plug-in estimator obtained from $M_{n,t}(g)$. Before we get there, we need to establish the uniform convergence of $M_{n,t}$, which has to happen within a compact subset of $\G$. Let us define the metric $\Delta(g_1,g_2)=\sup_{t}|g_1(t)-g_2(t)|$, for $g_1,g_2\in\G$.
    \begin{theorem}\label{thm:uniform_conv}
        Let $\G_0$ be a compact subset of\ \ $\G$ (defined in Theorem~\ref{thm:pointwise_conv}) in the metric space $(\G, \Delta)$ such that it includes $g_0$ and inverse functions of its members have second order derivative uniformly bounded across $\G_0$. Then, under Assumptions~A1, A2, A3, A5 and the additional assumption
        \begin{description}
            \item [A4*] The kernel $K$ in \eqref{eq:nw_type_functional} is a probability density function, which is symmetric about zero, continuous, bounded, bounded away from zero on a given closed interval, and has bounded first order derivative,
        \end{description}
        we have, for any $t\in(a,b)$, $$\sup_{g\in\G}|M_{n,t}(g)-M_t(g)|\conv{P}0\quad as\ n\rightarrow\infty.$$
    \end{theorem}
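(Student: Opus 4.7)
The strategy is to reduce the theorem to uniform convergence of the numerator and denominator of $M_{n,t}(g)$ separately. Write $M_{n,t}(g) = N_n(g)/D_n(g)$ with target limits $N(g)$ (the numerator of \eqref{eq:M}) and $D(g) = f_g(t)$. By A1, $D(g) \ge \xi f_1(t) > 0$ uniformly in $g \in \G_0$; once $\sup_{g \in \G_0}|N_n(g) - N(g)| \conv{P} 0$ and $\sup_{g \in \G_0}|D_n(g) - D(g)| \conv{P} 0$ are established, a standard quotient argument yields the conclusion. The denominator case is a strictly simpler variant of the numerator (no responses), so I argue only for $N_n$.

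Split $N_n(g) - N(g) = [N_n(g) - \mathbb{E} N_n(g)] + [\mathbb{E} N_n(g) - N(g)]$. For the bias, the change of variable $u = (t - g(v))/h_n$ in the integral representation of $\mathbb{E} N_n(g)$, together with a first-order Taylor expansion around $u = 0$ and the symmetry of $K$ in A4*, yields a pointwise $o(1)$ estimate. Uniformity over $\G_0$ follows because compactness of $\G_0$ in the sup metric plus the uniform bound on $(g^{-1})''$ makes the families $\{g^{-1}\}$ and $\{(g^{-1})'\}$ equicontinuous and uniformly bounded; continuity of $m$, $g_0$, $f_2$ from A2 and A3 then yields a common $o(1)$ envelope for the Taylor remainder across $g$.

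For the centered stochastic part, I use an $\epsilon$-net argument. The uniform bound on $(g^{-1})''$ keeps $\G_0$ uniformly smooth, so $\log N(\delta, \G_0, \Delta)$ is tame. Pick a $\delta_n$-net $\{g^{(1)}, \ldots, g^{(N_n)}\}$ of $\G_0$; the Lipschitz bound on $K$ from A4* gives the oscillation estimate
$$\sup_{\Delta(g, g') \le \delta_n}|N_n(g) - N_n(g')| \le \|K'\|_\infty \frac{\delta_n}{h_n^2} \cdot \frac{1}{n}\sum_{j=1}^{n_2}|y_{2j}|,$$
which is $o_P(1)$ provided $\delta_n = o(h_n^2)$, since the empirical mean of $|y_{2j}|$ is $O_P(1)$. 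At each net point, Bernstein's inequality (summands of order $(n h_n)^{-1}$, total variance of order $(n h_n)^{-1}$) gives $\mathbb{P}(|N_n(g^{(k)}) - \mathbb{E} N_n(g^{(k)})| > \eta) \le 2 \exp(-c n h_n \eta^2)$. A union bound over the net completes the stochastic bound whenever $\log N_n$ is compatible with $n h_n$.

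The main technical obstacle is matching scales: $\delta_n$ must shrink faster than $h_n^2$ to kill the oscillation, while the sup-norm covering number of $\G_0$ at scale $\delta_n$ must remain small relative to $\exp(c n h_n \eta^2)$ for the union bound to close. The uniform bound on $(g^{-1})''$ is precisely what constrains the entropy of $\G_0$ enough to admit such a $\delta_n$, and A4* keeps the oscillation linear in $\delta_n / h_n$; reconciling these with the bandwidth rate from A5 is the delicate final step.
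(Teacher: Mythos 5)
Your overall architecture (numerator/denominator split, bias plus centered stochastic part, net plus oscillation control) is reasonable, but the stochastic step has a genuine gap that you yourself flag as ``the delicate final step,'' and it cannot be closed under the stated hypotheses. Your oscillation bound for the numerator is $O_P(\delta_n/h_n^2)$, which forces the net mesh $\delta_n=o(h_n^2)$, so the net must grow with $n$ and you need a quantitative bound on $\log N(\delta_n,\G_0,\Delta)$. But the theorem only assumes $\G_0$ is compact with the inverses having uniformly bounded second derivatives: compactness gives a finite covering number at each fixed scale but no rate as $\delta\rightarrow0$, and even if you grant the natural Kolmogorov--Tikhomirov estimate $\log N(\delta)\asymp\delta^{-1/2}$ for a uniformly-$C^2$ family, plugging in $\delta_n\asymp h_n^2$ gives $\log N_n\asymp h_n^{-1}$, so the union bound over the net requires $nh_n^2\rightarrow\infty$. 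Assumption A5 only gives $nh_n\rightarrow\infty$ (e.g.\ $h_n=(n\log n)^{-1/2}$ satisfies A5 but kills your bound), so the scales do not reconcile.

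The paper avoids this entirely by bounding the oscillation of the \emph{ratio} $M_{n,t}$ rather than of $N_{n,t}$ alone: since A4* makes $D_{n,t}(g)\ge c/h_n$, the factor $h_n^2/c^2$ coming from the two denominators cancels the $1/h_n^2$ produced by differentiating the kernel, and Lemma~\ref{lem:oscillation_of_M} yields $|M_{n,t}(g)-M_{n,t}(\tilde g)|\le B_{n,t}(\tilde g)\Delta(g,\tilde g)$ with $B_{n,t}(\tilde g)=O_P(1)$. With an $O_P(1)$ Lipschitz constant, a \emph{fixed} finite subcover of $\G_0$ (depending on $\epsilon$ but not on $n$) suffices, and the convergence at the finitely many centers is supplied by the already-proved pointwise Theorem~\ref{thm:pointwise_conv} --- no Bernstein inequality, no entropy computation, and no extra bandwidth condition. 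If you want to salvage your route, you should either import this cancellation by working with the ratio, or strengthen A5 to $nh_n^2\rightarrow\infty$ and add an explicit entropy hypothesis on $\G_0$; as written, the argument does not close.
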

    Assumption~A4* is satisfied by e.g., Gaussian Kernel.

    Finally, we establish the consistency of a plug-in estimator in the next theorem.
    \begin{theorem}\label{thm:consistency}
        Suppose $\hat{g}_n$ is an estimator of $g_0$, which belongs to a compact set $\G_0$ as stated in Theorem~\ref{thm:uniform_conv}. Then, under the Assumptions~A1, A2, A3, A4*, A5 and appropriate additional conditions for ensuring consistency of $\hat{g}_n$, the estimator $m_{n,\hat{g}_n}(t)$ defined in~\eqref{eq:nw_type_estimator} converges in probability to $m(t)$ as $n\rightarrow\infty$ for all $t\in(a,b)$.
    \end{theorem}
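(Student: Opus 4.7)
The plan is to reduce the consistency of $m_{n,\hat{g}_n}(t)$ to a combination of the uniform stochastic convergence established in Theorem~\ref{thm:uniform_conv} and a purely analytic continuity property of the limiting functional $M_t$ at $g_0$. Since $M_t(g_0)=m(t)$, I would write
\begin{equation*}
m_{n,\hat{g}_n}(t)-m(t)=\bigl[M_{n,t}(\hat{g}_n)-M_t(\hat{g}_n)\bigr]+\bigl[M_t(\hat{g}_n)-M_t(g_0)\bigr]
\end{equation*}
and treat the two bracketed terms separately. Because $\hat{g}_n\in\G_0$ by hypothesis, the first bracket is bounded in absolute value by $\sup_{g\in\G_0}|M_{n,t}(g)-M_t(g)|$, which vanishes in probability by Theorem~\ref{thm:uniform_conv}.

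For the second bracket, the ``appropriate additional conditions'' deliver $\Delta(\hat{g}_n,g_0)\conv{P}0$; combined with continuity of the deterministic map $g\mapsto M_t(g)$ at $g_0$ with respect to $\Delta$, restricted to $\G_0$, the continuous mapping theorem then yields $M_t(\hat{g}_n)\conv{P}M_t(g_0)=m(t)$, and adding the two contributions completes the argument. Establishing the required continuity of $M_t$ at $g_0$ is the main obstacle.

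I would derive this continuity by showing that if $g_n\to g_0$ in $\Delta$ within $\G_0$, then every ingredient of
\begin{equation*}
M_t(g)=\frac{\xi m(t)f_1(t)+(1-\xi)\,m\circ g_0\circ g^{-1}(t)\,f_2\circ g^{-1}(t)\,(g^{-1})'(t)}{\xi f_1(t)+(1-\xi)\,f_2\circ g^{-1}(t)\,(g^{-1})'(t)}
\end{equation*}
converges appropriately. A standard monotone-inversion argument shows that uniform convergence of strictly increasing continuous functions forces $g_n^{-1}\to g_0^{-1}$ uniformly, which, together with continuity of $m$, $g_0$ and $f_2$ (from A2, A3, A1), settles the compositional factors. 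The delicate term is $(g_n^{-1})'(t)$: here I would invoke the hypothesis that $(g^{-1})''$ is uniformly bounded across $\G_0$, which makes the family $\{(g^{-1})':g\in\G_0\}$ equicontinuous (uniformly Lipschitz) and uniformly bounded. By Arzel\`a--Ascoli, every subsequence of $(g_n^{-1})'$ admits a further uniformly convergent sub-subsequence; the limit of any such sub-subsequence must equal $(g_0^{-1})'$ because $g_n^{-1}\to g_0^{-1}$ uniformly, so the whole sequence converges uniformly to $(g_0^{-1})'$. Finally, A1 keeps the denominator bounded below by $\xi f_1(t)>0$ on $(a,b)$, so the ratio is continuous at $g_0$ and the proof closes.
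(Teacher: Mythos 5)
Your proposal is correct and follows essentially the same route as the paper: the same decomposition of $m_{n,\hat{g}_n}(t)-m(t)$ into $\bigl[M_{n,t}(\hat{g}_n)-M_t(\hat{g}_n)\bigr]+\bigl[M_t(\hat{g}_n)-M_t(g_0)\bigr]$, with the first term controlled by Theorem~\ref{thm:uniform_conv} and the second by continuity of $M_t$ at $g_0$ on $\G_0$ combined with the assumed consistency $\Delta(\hat{g}_n,g_0)\conv{P}0$ (the paper phrases this as a three-way probability bound via Lemma~\ref{lem:unif_cont_of_M}). The only substantive difference is internal to the continuity step: the paper's Lemma~\ref{lem:gninv_pw_conv} derives uniform convergence of $(g_n^{-1})'$ by a Taylor/difference-quotient estimate using the uniform second-derivative bound on $\G_0$, whereas you use Arzel\`a--Ascoli with a subsequence-identification argument; both are valid.
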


    In the literature of curve registration, there are several estimators of $g_0$ that are consistent under various conditions. The additional requirement of Theorem~\ref{thm:consistency}, namely that~$\hat{g}_n$ should belong to~$\G_0$, has to be verified separately for each estimator~$\hat{g}_n$ and each compact set~$\G_0$. The Kernel-matched registration method \citep{Bhaumik_Srivastava_Sengupta_2017} is particularly suited to this device, as it produces a consistent estimator~$\hat{g}_n$ chosen within any given compact subset of $\G$. As long as the conditions for consistency of the estimator hold, no further verification is necessary.

    \subsection{Potential improvement in performance due to additional data}\label{subsec:improvement}

    As far as estimation of $m$ in \eqref{eq:the_model} is concerned, the idea of pooling the second data set with the first one stems from the expectation that more data would naturally lead to a better estimator. Let us examine if this really happens.

    The bias and the variance of the proposed plug-in estimator $m_{n,\hat{g}_n}(t)$ in \eqref{eq:nw_type_estimator} would depend on the choice of the plug-in estimator $\hat{g}_n$ of $g_0$. Instead of analyzing the performance of any particular estimator, we can consider the hypothetical estimator $m_{n,g_0}(t)$, which would be the appropriate choice if $g_0$ had been known. If this estimator performs better than the Nadaraya-Watson estimator based on the first data set alone, then that would indicate the  potential for improvement from the additional data. The issue is non-trivial, as it is not merely a matter of larger sample size. Even when $g_0$ is known, the model~\eqref{eq:the_model} has possibly different distributions of measurement errors and sampling times for the two sets of data, which sets it apart from the usual set-up used for analyzing the Nadaraya-Watson estimator.

    The next theorem gives the expression of mean squared error (MSE) of $m_{n,g_0}$.
    \begin{theorem}\label{thm:mse_g0}
        Suppose the following assumptions hold in respect of model~\eqref{eq:the_model} and the function~\eqref{eq:nw_type_functional}.
        \begin{description}
            \item[A1$'$] The densities $f_1$ and $f_2$ in model~\eqref{eq:the_model} have continuous second order derivatives and supports of $f_{\epsilon_1}$ and $f_{\epsilon_2}$ are bounded.
            \item[A2$'$] The mean function~$m:\R\rightarrow\R$ is bounded with continuous second order derivative.
            \item[A3$'$] The map $g_0:\R\rightarrow\R$ is strictly increasing and has continuous third order derivative.
            \item[A4$'$] Kernel $K$ is a continuous and compactly supported probability density function that is symmetric about zero.
            \item[A5$'$] $h_n\rightarrow 0$, $nh_n\rightarrow\infty$, as $n\rightarrow\infty$ and $n_1/n = \xi+o(h)$ where $\xi\in(0, 1)$.
        \end{description}
        Then for every $t\in(a,b)$,
        \begin{eqnarray}\label{eq:thm:mse_g0:main}
            MSE(m_{n,g_0}(t))&\!\!=&\!\!\frac{\xi f_1(t)\sigma_{\epsilon_1}^2+ (1-\xi) f_2\circ g_0^{-1}(t)(g_0^{-1})'(t)\sigma_{\epsilon_2}^2}{nh_n}\times\frac{\|K\|_2^2}{f_{g_0}^2(t)}\notag\\
            &&\!\!+\frac{h_n^4}{4}\left[m''(t)+\frac{2m'(t)f_{g_0}'(t)}{f_{g_0}(t)}\right]^2\!\!\mu_2^2(K)+o\left(h_n^4+\frac{1}{nh_n}\right)\!,\qquad\mbox{}
        \end{eqnarray}
        where $f_{g_0}(t)= \xi f_1(t)+(1-\xi) f_2\circ g_0^{-1}(t)(g_0^{-1})'(t)$,
        $$ \|K\|_2^2=\int_{-\infty}^{\infty} K^2(s)ds,\quad and\quad \mu_2(K)=\int_{-\infty}^{\infty} s^2K(s)ds.$$
    \end{theorem}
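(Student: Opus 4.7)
The plan is to treat $m_{n,g_0}(t)$ as a ratio, compute the first two moments of numerator and denominator, and then apply a delta-method argument to convert these into the MSE of the ratio. Write
\begin{equation*}
\hat f_n(t) = \frac{1}{nh_n}\left[\sum_{i=1}^{n_1} K\!\left(\tfrac{t-t_i}{h_n}\right) + \sum_{j=1}^{n_2} K\!\left(\tfrac{t-g_0(s_j)}{h_n}\right)\right]
\end{equation*}
for the denominator of \eqref{eq:nw_type_functional} at $g=g_0$, and $\hat r_n(t)$ for the numerator. The central identity is $m_{n,g_0}(t)-m(t) = \hat Q_n(t)/\hat f_n(t)$ with $\hat Q_n(t) = \hat r_n(t) - m(t)\hat f_n(t)$, so that $\hat Q_n$ is built from centred responses $y_{1i}-m(t)$ and $y_{2j}-m(t)$. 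I would split further $\hat Q_n = \hat A_n + \hat B_n$, where $\hat A_n$ collects the deterministic ``signal'' differences $m(t_i)-m(t)$ and $m(g_0(s_j))-m(t)$ while $\hat B_n$ collects the errors $\epsilon_{1i}$ and $\epsilon_{2j}$; by independence of errors and time points these two pieces are uncorrelated.

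To bring the second sample onto the time axis of the first, I would use the substitution $u=g_0(s)$, which turns $f_2(s)\,ds$ into $\tilde f_2(u)\,du$ with $\tilde f_2(u) = f_2(g_0^{-1}(u))(g_0^{-1})'(u)$; by A1$'$ and A3$'$ this density is $C^2$. A second-order Taylor expansion of $f_1$ and $\tilde f_2$ about $t$ after the scaling $v=(t-u)/h_n$ gives $E[\hat f_n(t)] = f_{g_0}(t) + O(h_n^2)$, the symmetry of $K$ killing the odd-order term. Applying the same manoeuvre to $\hat A_n(t)$ and combining the Taylor expansions of $m$, $f_1$ and $\tilde f_2$ yields, after cancellation of odd moments,
\begin{equation*}
E[\hat A_n(t)] = h_n^2 \mu_2(K)\Bigl[m'(t) f_{g_0}'(t) + \tfrac12 m''(t) f_{g_0}(t)\Bigr] + o(h_n^2),
\end{equation*}
while $E[\hat B_n(t)] = 0$; dividing by $f_{g_0}(t)$ and squaring produces the $h_n^4$ term of \eqref{eq:thm:mse_g0:main}. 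For the variance, independence of summands and the same change of variable give $\mathrm{Var}(\hat B_n(t)) = (nh_n)^{-1}\|K\|_2^2[\xi \sigma_{\epsilon_1}^2 f_1(t) + (1-\xi)\sigma_{\epsilon_2}^2 \tilde f_2(t)] + o((nh_n)^{-1})$, which after division by $f_{g_0}^2(t)$ matches the first term of \eqref{eq:thm:mse_g0:main}. A routine bound shows $\mathrm{Var}(\hat A_n) = O(h_n^3/n)$ and $\mathrm{Var}(\hat f_n) = O((nh_n)^{-1})$, both negligible.

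The MSE is assembled via $E[(m_{n,g_0}(t)-m(t))^2] = E[\hat Q_n^2/\hat f_n^2]$ and linearising $1/\hat f_n^2$ around $1/f_{g_0}^2(t)$, using $E[\hat Q_n^2] = (E[\hat A_n])^2 + \mathrm{Var}(\hat A_n) + \mathrm{Var}(\hat B_n)$. The main obstacle is this last step, since I need $L^2$ control of the remainder of the Taylor expansion of $1/\hat f_n$, not merely $o_P$ control. This is where A1$'$ (compactly supported errors, so $y_{1i}$ and $y_{2j}$ are bounded) and A4$'$ (compactly supported kernel) are crucial: together with $\inf_{t\in[a,b]}f_{g_0}(t)>0$, which itself follows from A1$'$ and A3$'$, they give an almost-sure bound on $|\hat Q_n|/\hat f_n$ off a tail event $\{\hat f_n(t) < f_{g_0}(t)/2\}$ whose probability decays faster than any polynomial in $nh_n$ by a Hoeffding-type argument; the contribution of that event is therefore absorbed into $o(h_n^4 + (nh_n)^{-1})$. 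Apart from this delta-method bookkeeping, the proof is a careful two-sample extension of the textbook MSE derivation for the Nadaraya--Watson estimator, with the change of variables $u = g_0(s)$ being the new ingredient.
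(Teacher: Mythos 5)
Your proposal is correct and follows essentially the same route as the paper: both linearize the ratio around the deterministic denominator $f_{g_0}(t)$, compute the second moment of the linearized statistic $\left(\hat r_n(t)-m(t)\hat f_n(t)\right)/f_{g_0}(t)$ via second-order Taylor expansions and the change of variables $u=g_0(s)$ (your bias/noise split of the numerator into $\hat A_n+\hat B_n$ is just a tidier bookkeeping of the paper's expansion of $E(r^2)+m^2E(f^2)-2mE(rf)$ into the integrals $I_1,\ldots,I_{10}$), and then show the remainder is $o\left(h_n^2+(nh_n)^{-1/2}\right)$. The one place you diverge is the remainder control: where you propose a concentration bound for the event $\{\hat f_n(t)<f_{g_0}(t)/2\}$ (note that plain Hoeffding only gives $\exp(-cnh_n^2)$, which need not decay when merely $nh_n\to\infty$, so you would want a Bernstein-type bound), the paper sidesteps any lower bound on $\hat f_n$ by using the almost-sure bound $|\hat r_n/\hat f_n|\le C$ available because the estimator is a weighted average of responses that are bounded under A1$'$ and A2$'$.
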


    When $n_2=0$, the first term in the expression of the MSE simplifies to $\frac{1}{n_1h_{n_1}}\sigma_{\epsilon_1}^2\|K\|^2/f_1(t)$, which coincides with the leading term in the expression of variance of the Nadaraya-Watson estimator based only on the first sample \citep{Collomb1977a}. The ratio of this term (in the general case) to its value in the special case of $n_2=0$ simplifies to
    \begin{equation}
        \frac{n_1h_{n_1}}{\xi nh_n}\times\frac{1+ \frac{(1-\xi) f_2\circ g_0^{-1}(t)(g_0^{-1})'(t)}{\xi f_1(t)}\times\frac{\sigma_{\epsilon_2}^2}{\sigma_{\epsilon_1}^2}} {\left(1+ \frac{(1-\xi) f_2\circ g_0^{-1}(t)(g_0^{-1})'(t)}{\xi f_1(t)}\right)^2}.\label{eq:varratio}
    \end{equation}
    Under Assumption A5$'$, the limiting value of the first factor in~\eqref{eq:varratio} is the limiting value of $h_{n_1}/h_n$. In particular, if the bandwidth is chosen to be inversely proportional to one-fifth power of the sample size \citep{Hardle_1990}, this factor goes to~$\xi ^{-1/5}$. The value of the second factor depends on $t$. If $\sigma_{\epsilon_2}^2$ is less than $2\sigma_{\epsilon_1}^2$, this factor can be shown to have value in between~0 (when $\frac{\xi f_1(t)}{(1-\xi) f_2\circ g_0^{-1}(t)(g_0^{-1})'(t)}$ is close to~0) and~1 (when $\frac{(1-\xi) f_2\circ g_0^{-1}(t)(g_0^{-1})'(t)}{\xi f_1(t)}$ is close to~0). If $\sigma_{\epsilon_2}^2$ is greater than $2\sigma_{\epsilon_1}^2$, this factor can be larger than~1 (but less than $\frac{\sigma_{\epsilon_2}^4}{\sigma_{\epsilon_1}^2(\sigma_{\epsilon_2}^2-2\sigma_{\epsilon_1}^2)}$) in areas where $(1-\xi) f_2\circ g_0^{-1}(t)(g_0^{-1})'(t)$ is smaller than $(\frac{\sigma_{\epsilon_2}^2}{\sigma_{\epsilon_1}^2}-2)\xi f_1(t)$, i.e., where the second data set is likely to be sparse.

    The second term in the expression~\eqref{eq:thm:mse_g0:main} simplifies in the special case $n_2=0$ to
    $$\frac{h_{n_1}^4}{4}\left[m''(t)+\frac{2m'(t)f_1'(t)}{f_1(t)}\right]^2\mu_2^2(K),$$
    which coincides with the square of the leading term in the expression of bias of the Nadaraya-Watson estimator based only on the first sample \citep{Collomb1977a, Collomb1977b}. The ratio of the second term to its value in the special case $n_2=0$ is \begin{equation}
    \frac{h_n^4}{h_{n_1}^4}\times\frac{\left[m''(t)+2m'(t)\frac{d}{dt}\log f_{g_0}(t)\right]^2} {\left[m''(t)+2m'(t)\frac{d}{dt}\log f_1(t)\right]^2}.\label{eq:biasqratio}
    \end{equation}
    The density $f_{g_0}$ is a mixture of $f_1$ and a time-transformed version of $f_2$. If this mixture density happens to have less sharp peaks than that of $f_1$, the term added to $m''$ in the numerator of the second factor of~\eqref{eq:biasqratio} would be smaller in magnitude than the corresponding term in the denominator. In that case, the extremes of the numerator would be less pronounced than that of the denominator. The limiting value of the first factor, if the bandwidth is chosen as inversely proportional to one-fifth power of the sample size \citep{Hardle_1990} and Assumption~A5$'$\ holds, would be $\xi ^{4/5}$, which is less than 1. Thus, the first factor has a shrinking effect on the ratio~\eqref{eq:biasqratio}. The larger the sample size of the second data set, the smaller would be the presumed value of $\xi $, and greater the shrinking effect.

    The remainder term in the expression~\eqref{eq:thm:mse_g0:main} is of the same order as that of the Nadaraya-Watson estimator \citep{Schimek_2000}.

\section{Some practical issues\label{sec:practical}}

\subsection{Standard error of estimator\label{ssec:se_m_est}}

        The proposed estimator $m_{n,\hat{g}_n}(t)$ given in \eqref{eq:nw_type_estimator} is fully specified only after $\hat{g}_n$ is chosen. Therefore, we suggest here a model-based bootstrap approach for estimating the variance of $m_{n,\hat{g}_n}(t)$, which would work irrespective of the method of estimating~$g_0$. The basis of such a bootstrap scheme would be the model~\eqref{eq:the_model}, with the functions $g_0$ and $m$, as well as the underlying distributions replaced by suitable estimates. For estimation of $m$, one can use $m_{n,\hat{g}_n}$ where $\hat{g}_n$ is the consistent estimator of $g_0$. In order to avoid ties in the resampled times, one can draw samples from kernel density estimators of $f_1$ and $f_2$, and adopt the same strategy for drawing samples from estimated $f_{\epsilon_1}$ and $f_{\epsilon_2}$, by using the residuals in the two samples as proxies of the respective model errors. One can use these bootstrap estimates to obtain not only the pointwise standard errors, but also confidence intervals and confidence bands.

\subsection{Usefulness of second data set\label{ssec:cvcomp}}

    In order to judge whether the second data set has indeed led to an improved estimator of the mean function $m$ in the time scale of the first data set, one may use the average of leave-one-out cross-validation squared prediction errors. Specifically, in the case of the pooled data set, the warping function $g_0$ has to be re-estimated after the deletion of every single observation from either data set. If the value of the cross-validation criterion for the pooled data set is smaller than its value for the first data set, then it may be concluded that the second data set has been useful.

\section{Simulation of Performance}\label{sec:sim}

    In order to study the possible improvement in estimation of the mean function from the use of an additional dataset, we consider the estimators $m_{n,\hat{g}_n}(t)$ given in \eqref{eq:nw_type_estimator} and the Nadaraya-Watson regression estimator for the first data set (denoted here by $m_{nw}(t)$). As a benchmark we also compare these estimators with the hypothetical estimator $m_{n,g_0}(t)$ based on the correct transformation $g_0$. The version of the estimator $m_{n,\hat{g}_n}(t)$ used in this study is based on the kernel-matched regression estimator of the warping function
    \begin{equation}
    \hat{g}_n=\arg\max_{g\in\G_0}\left\{ \frac
                {
                \displaystyle\frac{1}{n_1 n_2}
                \displaystyle\sum\limits_{i=1}^{n_1}
                \displaystyle\sum\limits_{j=1}^{n_2}
                \frac{1}{h_t}K_1\left(\frac{{t_i} - g({s_j})}{h_t}\right)
                \frac{1}{h_y}K_2\left(\frac{{y_{1i}} - y_{2j}}{h_y}\right)
                }
                {
                \displaystyle\frac{1}{n_1 n_2}\displaystyle\sum\limits_{i=1}^{n_1}
                \displaystyle\sum\limits_{j=1}^{n_2}
                \frac{1}{h_t}K_1\left(\frac{{t_i} - g(s_j)}{h_t}\right)}\right\},\label{eq:ghatn}
    \end{equation}
    proposed by \citep{Bhaumik_Srivastava_Sengupta_2017}, where $\G_0$ is the vector space generated by linear B-spline basis functions with equidistant knot points.

    The mean function chosen for the simulation is shown in Figure~\ref{fig:sim_m_fun}. This function is similar to the apparent movement of the carbon dioxide data series analysed in the next section. The first data set was obtained from the first equation of model~\eqref{eq:the_model} with 500 time samples chosen from the uniform distribution over the time range $[0, 415]$ and the additive errors were normal with mean zero and standard deviation equal to 10 per cent of the the empirical standard deviation of the $m(t_j)$'s. The time transformation function used to define the time scale of the second data set was:
    \begin{equation}
        g_0(t)=t+0.05t\sin\left(\frac{4\pi t}{415}\right),\quad 0\le t\le415.
    \end{equation}
    The density of the time samples of the second data set was chosen as,
    $$
        f_2(t)=\frac{t/415+\frac12}{415},\quad 0\le t\le415,
    $$
    and the distribution of the additive error was chosen as in the first data set. The size of the second data set was also 500. Note that the chosen $g_0$ is not a member of the search space of the estimator, described above.
   \begin{figure}[h!]
        \begin{center}
            \includegraphics{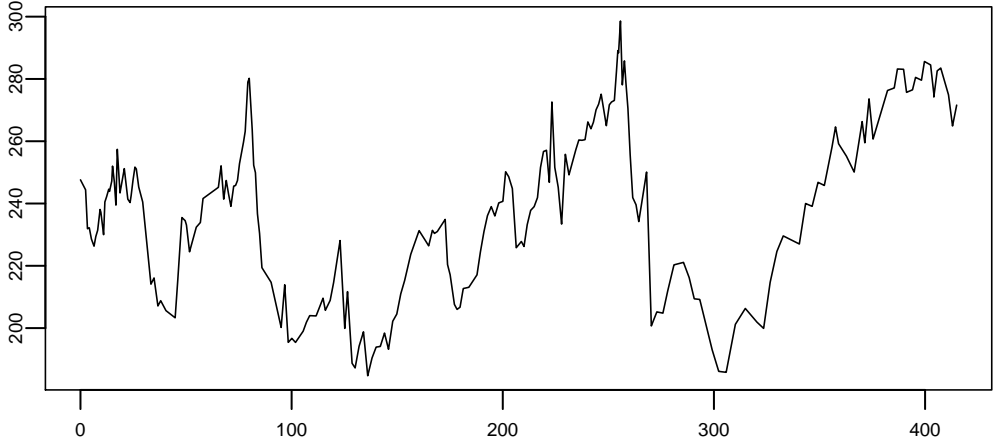}
            \caption{The mean function chosen for the simulation exercise}\label{fig:sim_m_fun}
        \end{center}
    \end{figure}

    Following the prescription given in~\citet{Bhaumik_Srivastava_Sengupta_2017}, a total of 30 equidistant knot points were chosen for estimation of $g_0$, Gaussian kernels were used in \eqref{eq:ghatn}, $h_t$ was chosen as half of the average horizontal separation between successive observations of the lesser dense data set and $h_y$ was chosen as 10 per cent of the combined range of the $y$-values of the two data sets. Following the computations in~\citet{Bhaumik_Srivastava_Sengupta_2017}, the optimization in \eqref{eq:ghatn} was done through multiple rounds of sequential grid search of the coefficients, arranged in their natural order (from left to right). The iterations for $g$ were started with the identity map. The Gaussian kernel truncated at $\pm8$ was used in \eqref{eq:nw_type_estimator}, while its bandwidth $h_n$ was chosen by the method of leave-one-out cross validation.

    \begin{figure}[h!]
        \begin{center}
            \includegraphics{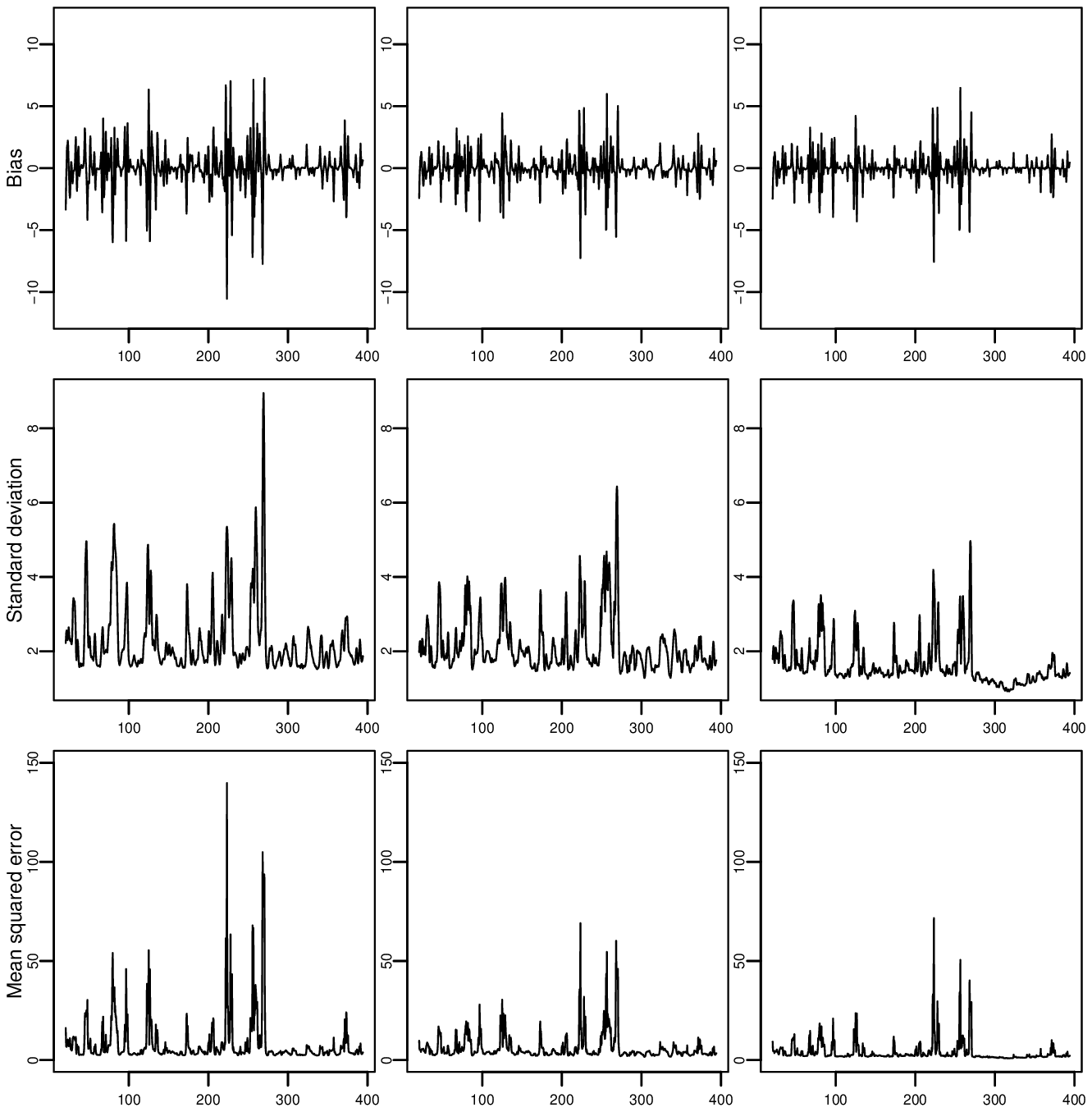}
            \caption{Simulated point-wise bias (first row), standard deviation (second row) and mean squared error (last row) of the estimates of mean function  by $m_{nw}(t)$ (first column), $m_{n,\hat{g}_n}(t)$ (second column) and $m_{n,g_0}(t)$ (last column)}\label{ch4:fig:bias_sd_mse}
        \end{center}
    \end{figure}

      \begin{figure}[h!]
        \begin{center}
            \includegraphics{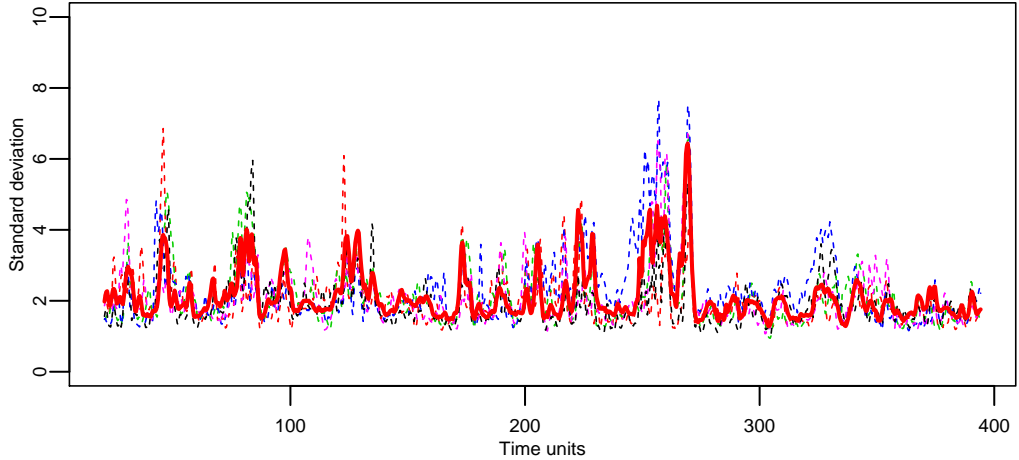}
            \caption{Bootstrap estimates of standard deviation of the estimated mean function by kernel-matched registration from five different simulation runs (dashes), with the empirical standard deviation obtained from 1000 simulation runs (thick solid line)}\label{ch4:fig:btstrp_se_of_m_est}
        \end{center}
    \end{figure}

    \begin{table}
        \caption{\label{tab:des_stat}Some descriptive statistics of the data sets}
        \centering
        \begin{tabular}{lrcrc}
             \hline
             & \multicolumn{2}{c}{Data set 1: EPICA Dome C} & \multicolumn{2}{c}{Data set 2: Vostok} \\
             Data&Size&Range ($Y$-Value)&Size&Range ($Y$-Value)\\
             \hline
             Carbon dioxide&537&183.8-298.6&283&182.2-298.7\\
             Methane&1,545&342-907&457&318-773\\
             Temp. deviations&5,028&(-)10.58-5.46&3,310&(-)9.39-3.23\\
             \hline
        \end{tabular}
    \end{table}
    The average pointwise bias, standard deviation and mean squared error of the three estimators, based on 1000 simulation runs, are shown in Figure~\ref{ch4:fig:bias_sd_mse}. The variance of the hypothetical estimator $m_{n,g_0}(t)$ is smaller towards the right side of the time axis. This pattern is anticipated from the discussion of section~\ref{subsec:improvement}, as the second data set has higher data density towards the right. The same pattern is seen in the pointwise standard deviation of~$m_{n,\hat{g}_n}(t)$. Overall, the proposed estimator~$m_{n,\hat{g}_n}(t)$ has better performance than~$m_{nw}(t)$ in all respects. A summary of the performances is captured by the integrated mean squared errors of $m_{nw}(t)$, $m_{n,\hat{g}_n}(t)$ and $m_{n,g_0}(t)$, which turned out to be 8.2, 6.1 and 3.8, respectively. Thus, the proposed estimator is able to exploit information from the second data set for improved estimation, although there is some loss incurred for not knowing the true transformation function.

    Figure~\ref{ch4:fig:btstrp_se_of_m_est} shows the simulated standard deviation of the estimated mean function, computed from 1000 runs, together with five examples of bootstrap estimates of the same (for the first five runs), each computed from 1000 resamples. The bootstrap estimates appear to be in line with the simulated pointwise standard error.

\section{Analysis of ice core data}\label{sec:data_anal}
We now return to the pleoclimatic data mentioned at the beginning of the paper. The data, gathered from ice cores collected at EPICA Dome C and over Lake Vostok in Antarctica, consist of (i) carbon dioxide~\citep{Luthi_et_al_2008, Petit_et_al_1999}, (ii) methane~\citep{Loulergue_et_al_2008, Petit_et_al_1999} and (iii) average annual temperature deviation~\citep{Jouzel_et_al_2007, Petit_et_al_1999} at different times measured in thousand years before present (YBP). Table~\ref{tab:des_stat} gives the summary of these variables in the two data sets. We chose the data sets from EPICA Dome C and Lake Vostok as the first and the second data sets, respectively, in~\eqref{eq:the_model}.

This choice was based on the fact that the data density of all the three series is higher at EPICA Dome C. The parameters of the estimators were chosen as in previous section. The overlaid plots of the proposed estimate~$m_{n,\hat{g}_n}(t)$ and the Nadaraya-Watson estimate~$m_{nw}(t)$ of the mean functions of the three data sets are shown in Figure~\ref{fig:m_est_co2_ch4_temp}. For the carbon dioxide data set, the peaks and valleys of the pooled estimate are sharper. Thus, the combined data indicate a slightly wider range of historic variation in this variable than the range suggested by the EPICA Dome C data alone. There is not much difference in the range of values of the two estimates of the mean function of the methane data. In contrast, the pooled estimate has smaller range of variation in the case of temperature deviation data. The likely reason for this is the discrepancy between the two sets of temperature data.


In order to determine whether the Lake Vostok data, after being registered and pooled with the EPICA dome C data, improve the estimation of the mean function, we computed the average of leave-one-out cross-validation squared prediction errors, described in Section~\ref{ssec:cvcomp}. The computed values of this metric, with and without the Lake Vostok data, on the three variables are reported in Table~\ref{tab:cv_err}. It is found that the use of the Lake Vostok data appears to improve estimation in the case of carbon dioxide, but not in the cases of methane and temperature. This may have happened because of discrepancy between the unaligned data sets plotted in the top rows of Figures~7-9 of \citet{Bhaumik_Srivastava_Sengupta_2017}, which is higher in the cases of methane and temperature.

    \begin{figure}[h!]
        \begin{center}
            \includegraphics{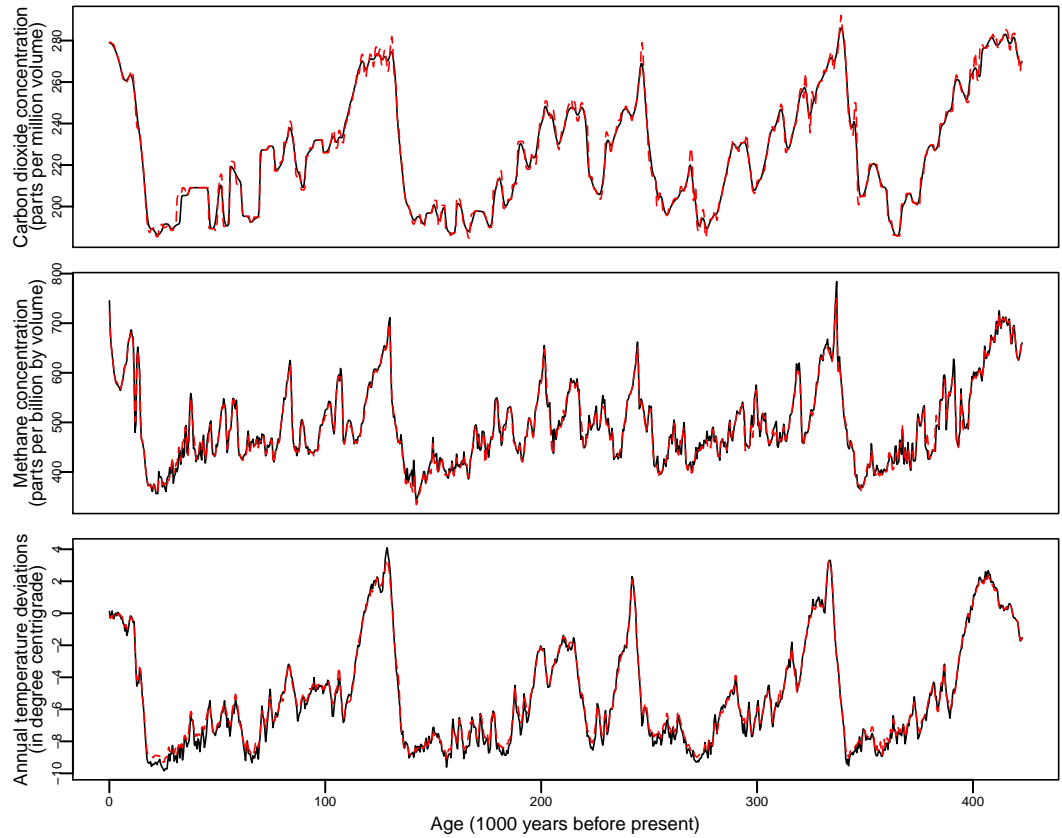}
            \caption{The overlaid plots of $m_{n,\hat{g}_n}(t)$ (dashes) and $m_{nw}(t)$ (dots) of the mean functions for the data sets on (a) carbon dioxide (top), (b) methane (middle) and (c) temperature deviations (bottom)}\label{fig:m_est_co2_ch4_temp}
        \end{center}
    \end{figure}

    \begin{table}
        \caption{\label{tab:cv_err}Average of leave-one-out cross-validation squared prediction errors of estimated mean functions of three variables}
        \centering
        \begin{tabular}{lrr}
             \hline
             &EPICA&EPICA Dome C\\
             &Dome C&data pooled\\
             &data only   &with aligned\\
             &&Lake Vostok data\\
             \hline
             Atmospheric concentration of carbon dioxide&38.2&18.1\\
             Atmospheric concentration of methane&341.9&571.9\\
             Average annual temperature deviation&0.3260&0.4374\\
             \hline
        \end{tabular}
    \end{table}

\section{Concluding remarks}\label{sec:discussion}
    We have shown in this paper how two sets of paleoclimatic functional data can be pooled, after registration, for consistent and improved estimation of the mean function. Analysis of the three pairs of data sets from EPICA Dome C and Lake Vostok has revealed some interesting departures resulting from the contributions of the smaller (Lake Vostok) data set.

    Even though the collection of paleoclimatic data from ice core experiments is very expensive, there are some shorter data sets on similar variables (see e.g. \citet{Blunier_et_al_1995}, \citet{Anklin_et_al_1997}, \citet{Smith_et_al_1997}, \citet{Brook_et_al_1997} and so on). Assuming that there would ever be only a handful of such data sets, structural averaging is not an option for estimation of the mean function, for reasons explained in Section~\ref{sec:constraint_of_symmetry}. One can register all the data sets with respect to one of them, possibly the one having the largest data density. The theorems presented in Section~\ref{subsec:con_res} should be extendable to more than two data sets, eventually leading to the consistency of a pooled estimator.

\appendix
\section{Appendix: Proofs of theoretical results}


    \begin{proof}[Proof of Theorem~\ref{thm:uniqueness}.]
        For contradiction, let $g_1^*$ and $g_2^*$ be an alternative pair of functions sharing all the properties of $g_1$ and $g_2$. The function $f=g_1-g_1^*$ satisfies the identity
        \begin{eqnarray}
        f\circ g_0(t)+f(t)&=& g_1\circ g_0(t)-g_1^*\circ g_0(t)+g_1(t)-g_1^*(t)\nonumber\\
        &=&[2t-g_1(t)]-[2t-g_1^*(t)]+g_1(t)-g_1^*(t)\nonumber\\
        &=&0\label{fsum}
        \end{eqnarray}
        for $t,g_0(t)$ lying in the interval $[a, b] \cap [c, d]$.

        Let $u_0\in [a, b] \cap [c, d]$ be such that $g_1(u_0)\ne g_1^*(u_0)$. Note that $g_0(u_0)$ has to be different from $u_0$ (otherwise $g_1(u_0)$, $g_2(u_0)$, $g_1^*(u_0)$ and $g_2^*(u_0)$ would all coincide with $u_0$). The monotonicity of $g_0$ implies that the graph of $g_0(t)$ does not intersect that of $t$ anywhere in the points $u_0$ and $g_0(u_0)$ on the real line, whichever happens to be to the right of the other. Suppose
        \begin{eqnarray*}
        S_l&=&\{t\in[a,b]\cap[c,d]:\ t<g_0(u_0),\ g_0(t)=t\},\\
        S_u&=&\{t\in[a,b]\cap[c,d]:\ t>u_0,\ g_0(t)=t\}.
        \end{eqnarray*}
        At least one of the sets $S_l$ and $S_u$ is non-empty, because their union includes at least one point, where the graph of $g_0(t)$ intersects that of $t$ (according to the form of $g_0$ hypothesized in the theorem). As argued above, $g_1$ and $g_1^*$ must coincide over $S_l\cup S_u$.

        Case I: $S_l$ is non-empty. Define $t_l=\sup S_l$. If $g_0(u_0)<u_0$, then $u_0$, $g_0(u_0)$, $g_0\circ g(u_0),\ldots$ is a decreasing sequence bounded from below by $t_l$, and should converge to $t_l$. If $g_0(u_0)>u_0$, then $u_0$, $g_0^{-1}(u_0)$, $g_0^{-1}(g_0^{-1}(u_0))$, $\ldots$ is a decreasing sequence bounded from below by $t_l$, and should converge to $t_l$. In either case, let us denote the convergent sequence by $u_0$, $u_1$, $u_2$, $\ldots$ and note from \eqref{fsum} that $f(u_i)=(-1)^if(u_0)$. However, by continuity of $f$ (which follows from the continuity of $g_1$ and $g_1^*$), this alternating sequence should converge to $f(t_l)$, which is equal to~0. Therefore,\linebreak $f(u_0)=0$.

        Case II: $S_l$ is empty, but $S_u$ is non-empty. Define $t_u=\inf S_u$. If $g_0(u_0)>u_0$, then $u_0$, $g_0(u_0)$, $g_0\circ g_0(u_0)$, $\ldots$ is an increasing sequence bounded from above by $t_u$, and should converge to $t_u$. If $g_0(u_0)<u_0$, then $u_0$, $g_0^{-1}(u_0)$, $g_0^{-1}\circ g_0^{-1}(u_0)$, $\ldots$ is an increasing sequence bounded from above by $t_u$, and should converge to $t_u$. In either case, let us denote the convergent sequence by $u_0$, $u_1$, $u_2$, $\ldots$ and note from \eqref{fsum} that $f(u_i)=(-1)^if(u_0)$. However, by continuity of $f$, this alternating sequence should converge to $f(t_u)$, which is equal to~0. Therefore, $f(u_0)=0.$
    \end{proof}


        \medskip


        \begin{proof}[Proof of Theorem~\ref{thm:existence}.]
            Define $g_1,g_2:\,[t_0,1]\rightarrow[0,1]$ as
            \begin{eqnarray*}
                g_1(t)&=&1-\frac{2(1-t)}{1+r},\\
                g_2(t)&=&1-\frac{2r(1-t)}{1+r}.
            \end{eqnarray*}
            It is easy to see that the requisite identities (a) and (b) hold over $[t_0,1]$. By Theorem~\ref{thm:uniqueness}, this pair of functions would be the unique solution, had the problem been restricted to the interval $[t_0,1]$.

            Now suppose, for contradiction, $g_1$ and $g_2$ exist in the requisite form over the entire interval [0,1]. Since the requisite functional identities must hold over the sub-interval $[t_0,1]$ also, the requisite functions $g_1$ and $g_2$ must be extensions of the functions defined above.

            Suppose, for $n\ge 1$, $t_n=g_0^{-1}(t_{n-1})$, i.e., $t_n=v^nt_0$ with $v=[1-r(1-t_0)]/t_0$. Since $v<1$, this sequence of points lie in the interval $[0,t_0)$ and converges to 0 as $n\rightarrow\infty$. Even though we have not defined $g_1$ and $g_2$ over $[0,t_0)$, we can define them at these points by making use of the functional identities. In particular,
            \begin{eqnarray*}
                g_1(t_n)&=&2t_n-g_2(t_n)\ \ =\ \ 2t_n-g_1\circ g_0(t_n)\\
                &=&2t_n-g_1(t_{n-1})\ \ =\ \ 2t_n-2t_{n-1}+g_1(t_{n-2})\\
                &=&2t_n-2t_{n-1}+2t_{n-2}-\cdots+2(-1)^{n-1}t_1+(-1)^ng_1(t_0)\\
                &=&(-1)^ng_1(t_0)+2t_0[v^n-v^{n-1}+\cdots+(-1)^{n-1}v].
            \end{eqnarray*}
            Therefore, when $n$ is even, we have
            \begin{eqnarray*}
                g_1(t_n)-g_1(t_{n+1})&=&g_1(t_0)+2t_0[v^n-v^{n-1}+\cdots-v]\\
                &&+g_1(t_0)-2t_0[v^{n+1}-v^n+v^{n-1}-\cdots+v]\\
                &=&2g_1(t_0)-2v^{n+1}t_0+4t_0[(v^n-v^{n-1})+\cdots(v^2-v)]\\
                &=&2g_1(t_0)-2v^{n+1}t_0+4t_0(v-1)[v^{n-1}+v^{n-3}+\cdots+v]\\
                &=&2g_1(t_0)-2v^{n+1}t_0-\frac{4t_0(1-v)v(1-v^n)}{1-v^2}\\
                &=&2g_1(t_0)-2v^{n+1}t_0-\frac{4t_0v(1-v^n)}{1+v}\\
                &=&2g_1(t_0)-\frac{4t_0v}{1+v}+2t_0\left[\frac{1-v}{1+v}\right]v^{n+1}.
            \end{eqnarray*}
            When $n$ is odd, we have
            \begin{eqnarray*}
                g_1(t_n)-g_1(t_{n+1})&=&-g_1(t_0)+2t_0[v^n-v^{n-1}+\cdots+v]\\
                &&-g_1(t_0)-2t_0[v^{n+1}-v^n+v^{n-1}-\cdots-v]\\
                &=&-2g_1(t_0)+2v^{n+1}t_0-4t_0[(v^{n+1}-v^n)+\cdots(v^2-v)]\\
                &=&-2g_1(t_0)+2v^{n+1}t_0-4t_0(v-1)[v^n+v^{n-2}+\cdots+v]\\
                &=&-2g_1(t_0)+2v^{n+1}t_0+\frac{4t_0(1-v)v(1-v^{n+1})}{1-v^2}\\
                &=&-2g_1(t_0)+2v^{n+1}t_0+\frac{4t_0v(1-v^{n+1})}{1+v}\\
                &=&-2g_1(t_0)+\frac{4t_0v}{1+v}+2t_0\left[\frac{1-v}{1+v}\right]v^{n+1}.
            \end{eqnarray*}
            For $g_1$ to be continuous at 0, the above two series should converge to zero. Therefore, one must have $g_1(t_0)=\frac{2t_0v}{1+v}$, i.e.,
            $$1-\frac{2(1-t_0)}{1+r}=\frac{2t_0[1-r(1-t_0)]}{t_0+1-r(1-t_0)}.$$
            This constraint can be rewritten as a quadratic equation in $r$, which has the roots 1 and $\frac{1+2t_0}{1-2t_0}$. The root $r=1$ corresponds to the trivial case of $g_0(t)=t$ for all $t\in[0,1]$. The other choice leads to negative value of $r$ or of $g(t_0)$. Thus, no solution exists.
        \end{proof}


        \medskip


     \begin{proof}[Proof of Theorem~\ref{thm:pointwise_conv}.]
        Write $M_{n,t}(g)$ described in \S~\ref{subsec:con_res} as
            \begin{eqnarray}
                M_{n,t}(g)=\frac{N_{n,t}(g)}{D_{n,t}(g)}\label{eq:Mn2},
            \end{eqnarray}
            and $M_t(g)$ in~(\ref{eq:M}) as
            \begin{eqnarray}
                M_t(g)=\frac{N_t(g)}{D_t(g)}\label{eq:M2},
            \end{eqnarray}
            where
            \begin{eqnarray}
                N_{n,t}(g)&=&\frac{1}{nh_n}\left\{\displaystyle\sum_{i=1}^{n_1}K\left(\frac{t-t_i}{h_n}\right) y_{1i}+\displaystyle\sum_{j=1}^{n_2}K\left(\frac{t-g(s_j)}{h_n}\right)y_{2j}\right\},\label{eq:Nn}\\
                D_{n,t}(g)&=&\frac{1}{nh_n}\left\{\displaystyle\sum_{i=1}^{n_1}K\left(\frac{t-t_i}{h_n}\right) +\displaystyle\sum_{j=1}^{n_2}K\left(\frac{t-g(s_j)}{h_n}\right)\right\}\label{eq:Dn},\\
                N_t(g)&=&\displaystyle\lambda_1m(t)f_1(t)+\lambda_2m\circ g_0\circ g^{-1}(t)f_2\circ g^{-1}(t)(g^{-1})'(t)\label{eq:N},\\
                D_t(g)&=&\displaystyle\lambda_1f_1(t)+\lambda_2f_2\circ g^{-1}(t)(g^{-1})'(t)\label{eq:D}.
            \end{eqnarray}
            We aim to prove the stated result by showing that
            \begin{equation}
            N_{n,t}(g)\conv{P}N_t(g)\mbox{ and }D_{n,t}(g)\conv{P}D_t(g)\mbox{ as }n\rightarrow\infty,\label{eq:NnDn_conv}
            \end{equation}
            and by using the Continuous Mapping Theorem.

            We write the expected value of~\eqref{eq:Nn} as
            \begin{eqnarray}
                E(N_{n,t}(g)) = E_{1n}+E_{2n},\notag
            \end{eqnarray}
            where
            \begin{eqnarray}
                E_{1n} &=& \frac{n_1}{n}\frac{1}{h_n}\int_a^bK\left(\frac{x-s}{h_n}\right)m(s)f_1(s)ds, \notag\\
                E_{2n} &=& \frac{n_2}{n}\frac{1}{h_n}\int_c^dK\left(\frac{t-g(s)}{h_n}\right)m\circ g_0(s)f_2(s)ds\notag.
            \end{eqnarray}
            By substituting $u=\frac{t-s}{h_n}$, and defining an extended version of $f_1$ to be~0 outside its support, one can write $E_{1n}$ as
            $$
                \frac{n_1}{n}\int_{-\infty}^{\infty}I_{\left(-\frac{b-t}{h_n},\frac{t-a}{h_n}\right)}(u)K(u)m(t-h_nu)f_1(t-h_nu)du.
            $$
            By Assumptions~A1,~A2, and that $m$ is bounded, it follows that the integrand is bounded. Further by Assumptions~A1,~A2,~A3 and that $m$ is continuous, it follows that the integrand converges to $K(u)m(t)f_1(t)$ as $n$ tends to infinity. Hence, by the dominated convergence theorem and Assumption~A3, we have
            $$
                E_{1n}\rightarrow\lambda_1m(t)f_1(t)\int_{-\infty}^{\infty}K(u)du
            $$
            as $n$ tends to infinity. Since $K$ is assumed to be a pdf (Assumption~A2)
            $$
                E_{1n}\rightarrow\lambda_1m(t)f_1(t)\quad\mbox{as }n\rightarrow\infty.
            $$
            The substitution $\frac{t-g(s)}{h_n}=u$, and an appropriate extension of $f_2$ beyond its support, allows one to write the second part of $E(N_{n,t}(g))$ as
            \begin{eqnarray}
                E_{2n}&=&\frac{n_2}{n}\int_{-\infty}^{\infty} I_{\left(-\left(\frac{g(d)-t}{h_n}\right),\left(\frac{t-g(c)}{h_n}\right)\right)} (u)K(u)m\circ g_0\circ g^{-1}(t-h_nu)\notag\\
                &&\qquad\times f_2(g^{-1}(t-h_nu))(g^{-1})'(t-h_nu)du.\notag
            \end{eqnarray}
            Since $g\in\G$, it follows from Assumptions~A1--A3 that
            $$E_{2n}\rightarrow\lambda_2m(g_0(g^{-1}(t)))f_2(g^{-1}(t))(g^{-1})'(t),$$ which implies
            \begin{eqnarray}
                E(N_n(g))\rightarrow N(g)\quad\mbox{as } n\rightarrow\infty.\label{eq:ENn_conv}
            \end{eqnarray}

            In order to calculate the limiting variance of \eqref{eq:Nn}, we observe
            $$N_n^2(g) = T_{1n}+T_{2n}+T_{3n}+T_{4n}+T_{5n},$$
            where
            \begin{eqnarray*}
                T_{1n} &=& \frac{1}{n^2h_n^2} \sum_{i=1}^{n_1}K^2\left(\frac{t-t_i}{h_n}\right)y_{1i}^2,\\
                T_{2n} &=& \frac{1}{n^2h_n^2} \sum_{j=1}^{n_2}K^2\left(\frac{t-g(s_j)}{h_n}\right)y_{2j}^2,\\
                T_{3n} &=& \frac{1}{n^2h_n^2} \sum_{i=1}^{n_1}\sum_{\substack{i'=1 \\ (\neq i)}}^{n_1}K\left(\frac{t-t_i}{h_n}\right)y_{1i}K\left(\frac{t-t_{i'}}{h_n}\right)y_{1i'},\\
                T_{4n} &=& \frac{1}{n^2h_n^2} \sum_{j=1}^{n_2}\sum_{\substack{j'=1 \\ (\neq j)}}^{n_2}K\left(\frac{t-g(s_j)}{h_n}\right)y_{2j}K\left(\frac{t-g(s_{j'})}{h_n}\right)y_{2j'},\\
                T_{5n} &=& \frac{2}{n^2h_n^2} \sum_{i=1}^{n_1}\sum_{j=1}^{n_2}K\left(\frac{t-t_i}{h_n}\right)y_{1i}K\left(\frac{t-g(s_{j})}{h_n}\right)y_{2j}.
            \end{eqnarray*}
            The expected value of the first term is
            \begin{eqnarray*}
               &&\hskip-25pt  E(T_{1n})\\&=&\frac{n_1}{n^2h_n^2}\int_a^bK^2\left(\frac{t-s}{h_n}\right)E(Y_s^2|s)f_1(s)ds\\
                &=&\frac{n_1}{n^2h_n^2}\int_a^bK^2\left(\frac{t-s}{h_n}\right)(m^2(s)+\sigma_{\epsilon_1}^2)f_1(s)ds\\
                &=&\frac{1}{nh_n}\frac{n_1}{n}\int_{-\infty}^{\infty} I_{\left(-\frac{b-t}{h_n},-\frac{t-a}{h_n}\right)}(u)K^2(u)(m^2(t-h_nu) + \sigma_{\epsilon_1}^2)f_1(t-h_nu)du.
            \end{eqnarray*}
            By Assumptions~A1--A3 and that $m$ is bounded and continuous, the integrand is bounded and it tends to
            $\left(m^2(t) + \sigma_{\epsilon_1}^2\right)f_1(t)K^2(u)$
            as $n\rightarrow\infty$. It follows from the Dominated Convergence Theorem that the integral tends to $\left(m^2(t) + \sigma_{\epsilon_1}^2\right)f_1(t)\|K\|_2^2$,
            where
            $$\|K\|_2^2=\int_{-\infty}^{\infty}K^2(u)du,$$
            and consequently (by Assumption~A3) $E(T_{1n}) \rightarrow 0$ as $n\rightarrow\infty$.
            Next,
            \begin{eqnarray*}
                &&\hskip-25pt E(T_{2n})\\
                &=&\frac{n_2}{n^2h_n^2}\int_c^dK^2\left(\frac{t-g(s)}{h_n}\right)(m^2(g_0(s))+\sigma_{\epsilon_2}^2)f_2(s)dt\\
                &=&\frac{1}{nh_n}\frac{n_2}{n}\int_{\infty}^{\infty} I_{\left(-\left(\frac{g(d)-t}{h_n}\right),\left(\frac{t-g(c)}{h_n}\right)\right)}(u) K^2(u)(m^2(g_0\circ g^{-1}(t-h_nu))+\sigma_{\epsilon_2}^2)\\
                &&\times f_2\circ g^{-1}(t-h_nu)(g^{-1})'(t-h_nu)du.
            \end{eqnarray*}
            It follows from a similar argument as that used for $T_{1n}$ that $E(T_{2n}) \rightarrow 0$ as $n\rightarrow\infty$. The third term simplifies as
            $$E(T_{3n}) = \frac{n_1(n_1-1)}{n^2h_n^2}\left\{\int_a^b K\left(\frac{t-u}{h_n}\right)m(u)f_1(u)du\right\}^2 = \left(1-\frac{1}{n_1}\right)E_{1n}^2. $$
            Likewise,
            \begin{eqnarray*}
                E(T_{4n}) &=& \left(1-\frac{1}{n_2}\right)E_{2n}^2,\\
                E(T_{5n}) &=& 2E_{1n} E_{2n}.
            \end{eqnarray*}
            and
            $$E(T_{3n})+E(T_{4n})+E(T_{5n}) = E^2(N_{n,t}(g))
                - \frac{1}{n_1}E_{1n}^2 - \frac{1}{n_2}E_{2n}^2.$$
            It follows that
            $$V\left(N_{n,t}(g)\right)= 
                E\left(T_{1n}\right) + E\left(T_{2n}\right) - \left(\frac{1}{n_1}E_{1n}^2 + \frac{1}{n_2}E_{2n}^2\right).$$
            We have already shown that $E_{1n}$ and $E_{2n}$ converge to constants. Therefore,
            \begin{equation}
                V\left(N_{n,t}(g)\right) \rightarrow 0\quad\mbox{as }n\rightarrow\infty. \label{eq:VNn_conv}
            \end{equation}
            From~(\ref{eq:ENn_conv}) and~(\ref{eq:VNn_conv}) it follows that $N_{n,t}(g)\conv{P}N_t(g)$.

            The calculations for computing expectation and variance of $D_{n,t}(g)$ and their limiting values are similar to those of $N_{n,t}(g)$ except that the expressions do not involve components arising out of $y_{1i}$'s and $y_{2j}$'s (i.e.\ $m$, $\sigma_{\epsilon_1}$, and $\sigma_{\epsilon_2}$). These calculations show that, as $n$ tends to infinity
            \begin{eqnarray*}
                E(D_{n,t}(g)) &\rightarrow& D_t(g),\\
                V(D_{n,t}(g)) &\rightarrow& 0,
            \end{eqnarray*}
            and consequently $D_{n,t}(g)\conv{P} D_t(g)$, thus proving~\eqref{eq:NnDn_conv}. Further, by Assumptions~A1 and A3, $D_t(g)\ge\lambda_1f_1(t)>0$ for all $g$ in $\G_0$. The stated result follows from the Continuous Mapping Theorem.$\hfill\Box$
        \end{proof}

        \medskip

        \begin{proof}[Proof of Theorem~\ref{thm:uniform_conv}.]
            Note that for $g$, $\tilde{g}$ in $\G_0$ and any $t\in(a,b)$, $$|M_{n,t}(g)-M_t(g)|\le|M_{n,t}(g)-M_{n,t}(\tilde{g})| +|M_t(\tilde{g})-M_t(g)|+|M_{n,t}(\tilde{g})-M_t(\tilde{g})|.$$ Suppose $\epsilon>0$. Lemma~{\ref{lem:unif_cont_of_M}} proved below ensures that there exists $\delta_{\epsilon}>0$ such that
            \begin{eqnarray}
                \Delta(g,\tilde{g})<\delta_{\epsilon} \Rightarrow |M_t(g)-M_t(\tilde{g})|<\frac{\epsilon}{3}.
            \end{eqnarray}
            From Lemma~\ref{lem:oscillation_of_M} proved below, we have
            $$
            |M_{n,t}(\tilde{g})-M_{n,t}(g)|\le B_{n,t}(\tilde{g})\Delta(g,\tilde{g}),
            $$
            where $B_{n,t}(\tilde{g})$ is a quantity that converges in probability to~$B_t(\tilde{g})$, defined in that lemma. Therefore,
            \begin{eqnarray}\label{eq:thm:uniform_conv:1}
                P\left(B_{n,t}(\tilde{g})>\max\left\{\frac{\epsilon}{3\delta_\epsilon},2B_t(\tilde{g})\right\}\right)\rightarrow0.
            \end{eqnarray}
            Define $N_\delta(\tilde{g})=\{g:\Delta(g,\tilde{g})<\delta\}.$
            For given $\tilde{g}$, let
            \begin{eqnarray*}
                \delta(\tilde{g},\epsilon)=
                \left\{
                \begin{array}{lll}
                \min\left\{\frac{\epsilon}{6
                B_t(\tilde{g})},\delta_{\epsilon}\right\}&\mbox{if}& B_t(\tilde{g})>0\\
                \delta_{\epsilon}&\mbox{if}& B_t(\tilde{g})=0.
                \end{array}\right.
            \end{eqnarray*}
            Note that $\left\{N_{\delta(\tilde{g},\epsilon)}(\tilde{g}):\tilde{g}
            \in\G\right
            \}$ is an open cover of $\G_0$.
            As $\G_0$ is compact, there exists a finite
            sub-cover say
            $\left\{N_{\delta(\tilde{g}_j,\epsilon)}(\tilde{g}_j)\right\}_{j=1
            \ldots k_\epsilon}$, with
            $\G_0\subset\cup_{j=1}^{k_{\epsilon}}N_{\delta(\tilde{g}_j,\epsilon)}
            (\tilde {g}_j)$ for some finite $k_\epsilon$. It follows that
            \begin{eqnarray*}
                &&\hskip-20pt \sup_{g\in\G}|M_{n,t}(g)-M_t(g)|\\
                &\leq&\max_{j=1,\ldots,k_\epsilon}\sup_{
                g\in N_{\delta(\tilde{g}_j,\epsilon)}(\tilde{g}_j)}
                |M_{n,t}(g)-M_t(g)|\\
                &\leq&\max_{j=1,\ldots,k_\epsilon}\left\{
                \sup_{g\in N_{\delta(\tilde{g}_j,\epsilon)}(\tilde{g}_j)}
                |M_{n,t}(g)-M_{n,t}(\tilde{g}_j)|+\sup_{
                g\in N_{\delta(\tilde{g}_j,\epsilon)}(\tilde{g}_j)}
                |M_t(\tilde{g} _j)-M_t(g)|\right.\\
                &&+\left.\sup_{
                g\in N_{\delta(\tilde{g}_j,\epsilon)}(\tilde{g}_j)}
                |M_{n,t}(\tilde{g}_j)-M_t(\tilde {g}_j)|\right\}\\
                &\leq&\max_{j=1,\ldots,k_\epsilon}\left\{\delta(\tilde{g}_j,
                \epsilon)B_{n,t}(\tilde{g}_j)+\frac{\epsilon}{3}+|M_{n,t}(\tilde{g}_j)-M_t(\tilde{g}
                _j)|\right\}\\
                &\leq&\max_{j=1,\ldots,k_\epsilon}\delta(\tilde{g}_j,
                \epsilon)B_{n,t}(\tilde{g}_j)+\frac{\epsilon}{3}+\sum_{j=1}^{k_\epsilon}|M_{n,t}(\tilde{
                g}_j)-M_t(\tilde{g}_j)|.
            \end{eqnarray*}
            Thus,
            \begin{eqnarray}\label{eq:thm:uniform_conv:final}
                &&\hskip-25pt P\left\{\sup_{g\in\G}
                |M_{n,t}(g)-M_t(g)|>\epsilon\right\}\notag\\
                &\leq&
                P\left\{\max_{j=1,\ldots,k_{\epsilon}}\delta(\tilde { g }
                _j,\epsilon)B_{n,t}(\tilde{g}_j)>\frac{\epsilon}{3}\right\}+P\left\{\sum_{j=1}^{
                k_\epsilon}|M_{n,t}(\tilde{g}_j)-M_t(\tilde{g}_j)|>\frac{\epsilon}{3}\right\}\notag\\
                &\leq&\sum_{j=1}^{k_{\epsilon}}
                P\left\{B_{n,t}(\tilde{g}_j)>\frac{\epsilon}{3\delta(\tilde{g}_j,\epsilon)}\right\}+P\left\{\sum_{j=1}^{
                k_\epsilon}|M_{n,t}(\tilde{g}_j)-M_t(\tilde{g}_j)|>\frac{\epsilon}{3}\right\}.\hskip40pt\mbox{}
            \end{eqnarray}

            \noindent
            Each summand of the first term on the right hand side of
            (\ref{eq:thm:uniform_conv:final}) goes to zero by (\ref{eq:thm:uniform_conv:1}), while the second term goes
            to zero by Theorem~\ref{thm:pointwise_conv}. This completes the proof.
        \end{proof}

        \medskip

    \begin{lemma}\label{lem:unif_cont_of_M}
        Let $m$ and $\G_0$ be as described in Theorem~\ref{thm:uniform_conv}. Then under Assumptions~A1, A2, A3 and~A5, the functional $M_t$ in~(\ref{eq:M}) is uniformly continuous on $\G_0$ for any $t\in(a,b)$.
    \end{lemma}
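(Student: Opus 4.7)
The plan is to prove continuity of the functional $M_t$ at every $\tilde g\in\G_0$ with respect to the metric $\Delta$, and then invoke the Heine--Cantor theorem on the compact space $(\G_0,\Delta)$ to upgrade this pointwise continuity to uniform continuity on $\G_0$. Writing $M_t(g)=N_t(g)/D_t(g)$ as in \eqref{eq:N}--\eqref{eq:D}, Assumption~A1 gives $D_t(g)\ge\xi f_1(t)>0$ uniformly over $g\in\G_0$ at any fixed $t\in(a,b)$, so continuity of $M_t$ at $\tilde g$ reduces to continuity of $N_t$ and $D_t$ at $\tilde g$. Since $m$, $g_0$, $f_1$ and $f_2$ are continuous by Assumptions~A1--A3, it suffices to establish continuity at $\tilde g$ of the two scalar evaluations $g\mapsto g^{-1}(t)$ and $g\mapsto (g^{-1})'(t)$.

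For the first map, observe that if $u=g^{-1}(t)$, then $|\tilde g(u)-t|=|\tilde g(u)-g(u)|\le\Delta(g,\tilde g)$. Since $\tilde g$ is strictly increasing and continuously differentiable, $\tilde g^{-1}$ is continuous at $t$, so any prescribed upper bound on $|g^{-1}(t)-\tilde g^{-1}(t)|=|u-\tilde g^{-1}(t)|$ can be forced by taking $\Delta(g,\tilde g)$ small enough.

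The second map is the crux, and is where the hypothesis that inverses in $\G_0$ have a uniformly bounded second derivative~$B$ is essential. Taylor's theorem with integral remainder gives, for every $g\in\G_0$, every $t\in(a,b)$ and every sufficiently small $h>0$,
\begin{equation*}
\left|(g^{-1})'(t)-\frac{g^{-1}(t+h)-g^{-1}(t)}{h}\right|\le\frac{Bh}{2},
\end{equation*}
so, by the triangle inequality,
\begin{equation*}
|(g^{-1})'(t)-(\tilde g^{-1})'(t)|\le Bh+\frac{1}{h}\bigl[|g^{-1}(t+h)-\tilde g^{-1}(t+h)|+|g^{-1}(t)-\tilde g^{-1}(t)|\bigr].
\end{equation*}
Given $\epsilon>0$, I would first choose $h=\epsilon/(2B)$ to control the $Bh$ term, and then apply the previous paragraph at the two points $t$ and $t+h$ to force the bracketed term below $h\epsilon/2$ by taking $\Delta(g,\tilde g)$ sufficiently small.

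Combining these two continuity statements with the continuity and boundedness of $m$, $g_0$, $f_1$, $f_2$ and the strict positivity of $D_t(\tilde g)$ yields continuity of $M_t$ at each $\tilde g\in\G_0$ in the metric $\Delta$. Heine--Cantor on the compact set $\G_0$ then delivers uniform continuity. The main obstacle is the derivative step: the bare sup-metric $\Delta$ does not by itself control derivatives of inverse functions, and it is precisely the uniform second-derivative bound from the definition of $\G_0$ that makes the Taylor comparison legitimate.
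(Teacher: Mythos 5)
Your proposal is correct and follows essentially the same route as the paper: reduce to pointwise continuity on the compact set $\G_0$, observe that $D_t(g)\ge\xi f_1(t)>0$ so that only continuity of $g\mapsto g^{-1}(t)$ and $g\mapsto(g^{-1})'(t)$ needs to be checked, and handle the derivative via a Taylor/difference-quotient comparison that exploits the uniform bound on second derivatives of inverses over $\G_0$ --- which is exactly the content of the paper's Lemma~\ref{lem:gninv_pw_conv}. The only cosmetic difference is that the paper states the inverse-function convergence as uniform in $t$ over $[\alpha,\beta]$ and packages it as a separate lemma, whereas you argue at a fixed $t$, which suffices here.
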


    \noindent
    \begin{proof}[Proof.]
        Since $\G_0$ is compact, it is enough to establish pointwise continuity of $M_x$ on $\G_0$. Let $g$, $g_1$, $g_2$, $\ldots$ be functions in $\G_0$ such that $\lim_{k\rightarrow \infty}\Delta(g_k,g)=0$. The expression of $N_t(g)$ in~\eqref{eq:N} implies that
        $$N_t(g_k)=\xi m(t)f_1(t)+ (1-\xi)m\circ g_0\circ g_k^{-1}(t)f_2\circ g_k^{-1}(t)(g_k^{-1})'(t).$$
        It follows from Assumptions~A1, A2, A3 and Lemma~\ref{lem:gninv_pw_conv} (stated and proved below) that, as $k\rightarrow\infty$, the terms $m\circ g_0\circ g_k^{-1}(t)$, $f_2\circ g_k^{-1}(t)$ and $(g_k^{-1})'(t)$ on the right hand side converge, to $m\circ g_0\circ g^{-1}(t)$, $f_2\circ g^{-1}(t)$ and $(g^{-1})'(t)$, respectively, uniformly over all~$t\in [\alpha,\beta]$, where $\alpha$ and $\beta$ are as defined in that lemma. For $t\not\in[\alpha,\beta]$, both~$N_t(g_k)$ and~$N_t(g)$ are~0. Therefore,
        $N_t(g_k)$ converges to $N_t(g)$, i.e., the functional $N_t$ is continuous on $\G_0$ for all real~$t$. A similar argument shows that the functional $D_t$ defined in~\eqref{eq:D} is continuous on $\G_0$ as well.

        As $t$ is in the support of $f_1$, $f_1(t)>0$ (by Assumption~A1) and as $\xi>0$ (by Assumption~A5), $D_t(g)>0$ for all $g$ in $G$. This establishes that the functional $M_t$ defined in~(\ref{eq:M2}) is continuous on $\G_0$.
    \end{proof}

    \medskip

    \begin{lemma}\label{lem:gninv_pw_conv}
        Let $\G_0$ be as in Theorem~\ref{thm:uniform_conv}. Further, let $g$, $g_1$, $g_2$, $\ldots$ be functions in $\G_0$ such that $g_k\rightarrow g$ uniformly. Let $\alpha=\inf\{g(c),g_1(c),g_2(c),\ldots\}$ and $\beta=\sup\{g(d),g_1(d),g_2(d),\ldots\}$. Then
        \begin{enumerate}
        \item[(a)] $g_k^{-1}\rightarrow g^{-1}$ uniformly on $[\alpha,\beta]$.
        \item[(b)] $(g_k^{-1})'\rightarrow (g^{-1})'$ uniformly on $[\alpha,\beta]$.
        \end{enumerate}
    \end{lemma}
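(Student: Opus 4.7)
This is the standard fact that uniform convergence of homeomorphisms passes to inverses on compact sets. The key identity is
\[
g(g_k^{-1}(y))-g(g^{-1}(y))\;=\;g(g_k^{-1}(y))-g_k(g_k^{-1}(y)),
\]
whose absolute value is at most $\Delta(g_k,g)$ for every $y\in[\alpha,\beta]$. Given $\epsilon>0$, choose $\delta>0$ from the uniform continuity of $g^{-1}$ on a slight enlargement of $[\alpha,\beta]$ (possible since $g$ is a $C^1$ bijection of $\R$). For $k$ large enough that $\Delta(g_k,g)<\delta$, applying $g^{-1}$ to both sides yields $|g_k^{-1}(y)-g^{-1}(y)|<\epsilon$ uniformly in $y\in[\alpha,\beta]$.

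\textbf{Plan for part~(b).} Let $M$ denote the uniform bound on $|(g^{-1})''|$ across $g\in\G_0$. Taylor's theorem applied to $g_k^{-1}$ and to $g^{-1}$ gives, for $y,y+h\in[\alpha,\beta]$,
\[
\left|\,(g_k^{-1})'(y)-\frac{g_k^{-1}(y+h)-g_k^{-1}(y)}{h}\,\right|\;\le\;\frac{M|h|}{2},
\]
and the analogous inequality with $g_k^{-1}$ replaced by $g^{-1}$. The triangle inequality then yields
\[
\bigl|(g_k^{-1})'(y)-(g^{-1})'(y)\bigr|\;\le\;M|h|\;+\;\frac{2\,\Delta(g_k^{-1},g^{-1})}{|h|},
\]
uniformly in $y\in[\alpha,\beta]$. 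Given $\eta>0$, first fix $h$ so that $M|h|<\eta/2$; then, by part~(a), $\Delta(g_k^{-1},g^{-1})\to 0$ as $k\to\infty$, so for all sufficiently large $k$ the remaining term is below $\eta/2$. This proves uniform convergence on $[\alpha,\beta]$. Endpoint values of $y$ are handled by choosing $h$ with the appropriate sign, which is permissible because $g_k^{-1}$ and $g^{-1}$ are defined on all of $\R$.

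\textbf{Main obstacle.} The substantive content is in part~(b): uniform convergence of $g_k$ alone does not imply convergence of $g_k'$, and therefore says nothing a priori about $(g_k^{-1})'=1/(g_k'\comp g_k^{-1})$. The hypothesis that $|(g^{-1})''|$ is uniformly bounded over $\G_0$ is precisely what bridges this gap, since it controls the linearization error in the Taylor bound above (equivalently, it makes the family $\{(g_k^{-1})'\}$ equi-Lipschitz, so that pointwise convergence at each $y$ automatically upgrades to uniform convergence on the compact interval $[\alpha,\beta]$). Lemma~\ref{lem:unif_cont_of_M} upstream relies on both conclusions of this lemma, so the two parts are actually used together, and the compact-interval setup is essential: uniformity would fail on $\R$ as a whole.
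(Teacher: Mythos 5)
Your proposal is correct and follows essentially the same route as the paper: part (a) via the identity $g\circ g_k^{-1}(y)-y=(g-g_k)\circ g_k^{-1}(y)$ combined with a uniform modulus of continuity (the paper uses the Lipschitz constant of $g^{-1}$ coming from its bounded derivative), and part (b) via second-order Taylor expansion of the inverses, bounding the difference of derivatives by difference quotients plus a remainder controlled by the uniform bound on $|(g^{-1})''|$ over $\G_0$, then invoking part (a). The only differences are cosmetic (halves versus thirds in the $\epsilon$-budget).
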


    \noindent
    \begin{proof}[Proof.]
        Choose $\epsilon>0$. Since $g^{-1}$ has bounded second derivative, it has a bounded first derivative over the interval $[\alpha,\beta]$. If $B>0$ is an upper bound on the derivative, it follows from the Mean Value Theorem that $|g^{-1}(t_1)-g^{-1}(t_2)|<B|t_1-t_2|$ for all $t_1,t_2\in [\alpha,\beta]$. The uniform convergence of $g_1$, $g_2$, $\ldots$ to $g$ implies that there exists an integer $N_\epsilon$ such that $k\ge N_\epsilon$ implies $|g(u)-g_k(u)|<\epsilon/B$ for all $u\in[c,d]$. For any $t\in [\alpha,\beta]$, we can choose $u=g_k^{-1}(t)$, so that $$k>N_\epsilon\Rightarrow |g\circ g_k^{-1}(t)-t|=|g\circ g_k^{-1}(t)-g_k\circ g_k^{-1}(t)|<\frac{\epsilon}{B}.$$
        Thus, $k>N_\epsilon$ implies
        $$|g_k^{-1}(t)-g^{-1}(t)|=|g^{-1}\circ g\circ g_k^{-1}(t)-g^{-1}(t)| \le B|g\circ g_k^{-1}(t)-t|<\epsilon$$
        for all $t\in [\alpha,\beta]$, i.e., $g_k^{-1}\rightarrow g^{-1}$ uniformly on $[\alpha,\beta]$. This completes the proof of part~(a).

        To prove part (b), observe for any $t,t+h\in [\alpha,\beta]$, by Taylor series expansion,
        \begin{eqnarray*}
            g_k^{-1}(t+h)&=&g_k^{-1}(t)+h(g_k^{-1})'(t)+\frac{h^2}{2!}(g_k^{-1})''(t+\xi_1(t,h))\\
            g^{-1}(t+h)&=&g^{-1}(t)+h(g^{-1})'(t)+\frac{h^2}{2!}(g^{-1})''(t+\xi_2(t,h))
        \end{eqnarray*}
        where $\xi_1(t,h)$ and $\xi_2(t,h)$ are some numbers lying in between $t$ and $t+h$. Therefore,
        \begin{eqnarray}\label{eq:lem:deriv_conv:1}
            &&\hskip-30pt |(g_k^{-1})'(t)-(g^{-1})'(t)|\notag\\
            &\le&\frac{1}{h}|g_k^{-1}(t+h)-g^{-1}(t+h)| +\frac{1}{h}|g_k^{-1}(t)-g^{-1}(t)|\notag\\
            &&+\frac{h}{2!}\left\{|(g_k^{-1})''(t+\xi_1(t,h))|+|(g^{-1})''(t+\xi_2(t,h)|\right\}.
        \end{eqnarray}
        Let $C>0$ be a common upper bound for the second derivatives of inverses of all the functions in $\G_0$, over $[\alpha,\beta]$. Given $\epsilon>0$, choose $h<\frac{\epsilon}{3C}$. Then there exists $N_{\epsilon,h}$ such that $|g_k^{-1}(t)-g^{-1}(t)|<\frac{\epsilon h}{3}$ for $k>N_{\epsilon,h}$ and for all $t$. Therefore, from~(\ref{eq:lem:deriv_conv:1}) we have,
        $$
            k>N_{\epsilon,h}\Rightarrow |(g_k^{-1})'(t)-(g^{-1})'(t)|\le\frac{\epsilon}{3}+\frac{\epsilon}{3}+\frac{h}{2}2C <\epsilon\quad\forall\ t,
        $$
        which completes the proof of part (b).
    \end{proof}

    \medskip

    \medskip

    \begin{lemma}\label{lem:oscillation_of_M}
        Let $\G$ be as described in Theorem~\ref{thm:pointwise_conv} and $M_{n,t}$ be as in~(\ref{eq:Mn2}). Then, under Assumptions~A1, A2, A3, A4* and A5, for $t\in(a,b)$,
        \begin{enumerate}
        \item[(a)] $|M_{n,t}(\tilde{g})-M_{n,t}(g)|\le B_{n,t}(\tilde{g})\Delta(g,\tilde{g})$, where $g$, $\tilde{g}\in\G$ and  $$B_{n,t}(\tilde{g})=\frac{n_2}{n}\frac{C_{K'}}{c^2} \left\{D_{n,t}(\tilde{g})\frac{1}{n_2}\displaystyle \sum_{j=1}^{n_2}|y_{2j}|+N_{n,t}(\tilde{g})\right\},$$
            $N_{n,t}$ and $D_{n,t}$ being as defined in \eqref{eq:Nn} and \eqref{eq:Dn}, respectively.
        \item[(b)] $B_{n,t}(\tilde{g})\conv{P}B_t(\tilde{g})$, where $$B_t(\tilde{g})=(1-\xi)\frac{C_{K'}}{c^2}\left\{ D_t(\tilde{g})\mu_{|y_{2}|}+N_t(\tilde{g})\right\},$$
        \end{enumerate}
        $N_t$ and $D_t$ being as defined in \eqref{eq:N} and \eqref{eq:D}, respectively, and $\mu_{|y_2|}=E(|y_{2j}|)$ for all $j=1, 2, \ldots\, n_2$.
    \end{lemma}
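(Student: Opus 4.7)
I would start from the quotient structure $M_{n,t}(g) = N_{n,t}(g)/D_{n,t}(g)$ and use the algebraic identity
\begin{equation*}
M_{n,t}(\tilde g) - M_{n,t}(g)
= \frac{N_{n,t}(\tilde g)\,[D_{n,t}(g)-D_{n,t}(\tilde g)] + D_{n,t}(\tilde g)\,[N_{n,t}(\tilde g)-N_{n,t}(g)]}{D_{n,t}(\tilde g)\,D_{n,t}(g)}.
\end{equation*}
The summands over the first sample in both $N_{n,t}$ and $D_{n,t}$ do not involve $g$, so they cancel in the two bracketed differences, leaving only contributions from $j=1,\ldots,n_2$. Assumption~A4* provides a uniform bound $C_{K'}$ on $|K'|$, so a single-variable mean value argument gives
\begin{equation*}
\left|K\!\left(\frac{t-g(s_j)}{h_n}\right) - K\!\left(\frac{t-\tilde g(s_j)}{h_n}\right)\right| \le \frac{C_{K'}}{h_n}\,|g(s_j)-\tilde g(s_j)| \le \frac{C_{K'}}{h_n}\,\Delta(g,\tilde g).
\end{equation*}
Summing term-by-term, $|D_{n,t}(\tilde g)-D_{n,t}(g)|$ is bounded by a factor of $(n_2/n)\,\Delta(g,\tilde g)$ (times a bandwidth-dependent constant), and $|N_{n,t}(\tilde g)-N_{n,t}(g)|$ by the same factor further weighted by $n_2^{-1}\sum_j|y_{2j}|$.

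\textbf{Lower bound on the denominators.} To assemble the claimed inequality I need a uniform positive bound $c$ on $D_{n,t}(g)$ over $g\in\G$. The key observation is that the $g$-dependent sum in $D_{n,t}(g)$ is non-negative, so $D_{n,t}(g)$ is bounded below by the first-sample piece $(nh_n)^{-1}\sum_i K((t-t_i)/h_n)$, which depends only on the first data set and converges in probability to $\xi f_1(t)>0$ at an interior $t\in(a,b)$ by Assumption~A1 and the calculation carried out in the proof of Theorem~\ref{thm:pointwise_conv}. Hence on an event whose probability tends to one there is a deterministic $c>0$ with $D_{n,t}(g)\,D_{n,t}(\tilde g)\ge c^2$ simultaneously for every $g,\tilde g\in\G$. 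Feeding this into the telescoping identity delivers the claimed inequality with
\begin{equation*}
B_{n,t}(\tilde g) = \frac{n_2}{n}\,\frac{C_{K'}}{c^2}\!\left\{D_{n,t}(\tilde g)\,\frac{1}{n_2}\sum_{j=1}^{n_2}|y_{2j}| + N_{n,t}(\tilde g)\right\},
\end{equation*}
with any residual bandwidth factor from the mean value step absorbed into the constant $c$.

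\textbf{Plan for part (b).} The quantity $B_{n,t}(\tilde g)$ is a continuous function of the three random ingredients $N_{n,t}(\tilde g)$, $D_{n,t}(\tilde g)$ and $n_2^{-1}\sum_j|y_{2j}|$. Theorem~\ref{thm:pointwise_conv} already supplies $N_{n,t}(\tilde g)\conv{P}N_t(\tilde g)$ and $D_{n,t}(\tilde g)\conv{P}D_t(\tilde g)$. Under model~\eqref{eq:the_model} the variables $|y_{2j}|$ are i.i.d.~with finite mean (boundedness of $m$ via~A2 plus finiteness of $E|\epsilon_{2j}|\le\sigma_{\epsilon_2}$), so the weak law of large numbers yields $n_2^{-1}\sum_j|y_{2j}|\conv{P}\mu_{|y_2|}$. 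Combining these with $n_2/n\to 1-\xi$ from~A5 through Slutsky's theorem gives $B_{n,t}(\tilde g)\conv{P}B_t(\tilde g)$.

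\textbf{Main obstacle.} The delicate piece is producing a positive constant $c$ that uniformly lower-bounds $D_{n,t}(g)$ over the entire class $\G$ of strictly increasing, continuously differentiable functions on $\R$---a class far larger than the compact $\G_0$ used downstream. The trick of discarding the $g$-dependent sum and retaining only the first-sample contribution sidesteps having to control the $g$-dependent term uniformly, and this is what makes a single $c$ work for all $g\in\G$. The remaining steps---tracking the constants through the mean value bound and invoking the standard convergence machinery---are routine once this uniform positivity is in hand.
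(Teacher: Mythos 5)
Your overall architecture matches the paper's: the quotient decomposition of $M_{n,t}(\tilde g)-M_{n,t}(g)$, the observation that the first-sample sums cancel, the mean value bound $\bigl|K\bigl((t-g(s_j))/h_n\bigr)-K\bigl((t-\tilde g(s_j))/h_n\bigr)\bigr|\le (C_{K'}/h_n)\,\Delta(g,\tilde g)$ from A4*, and, for part (b), the law of large numbers for $|y_{2j}|$ combined with the convergence of $N_{n,t}$ and $D_{n,t}$ from Theorem~\ref{thm:pointwise_conv}. The gap is in the step you flag as the ``main obstacle,'' and your resolution of it does not deliver the stated bound. Discarding the $g$-dependent sum leaves $D_{n,t}(g)\ge (nh_n)^{-1}\sum_i K\bigl((t-t_i)/h_n\bigr)$, which stabilises at the \emph{constant} level $\xi f_1(t)$; with $D_{n,t}(g)D_{n,t}(\tilde g)\ge c^2$ for a fixed $c$, the mean value step leaves an uncancelled factor $h_n^{-2}$ in front of the bracketed expression, so your Lipschitz coefficient is $h_n^{-2}$ times the displayed $B_{n,t}(\tilde g)$. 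A factor $h_n^{-2}\rightarrow\infty$ cannot be ``absorbed into the constant $c$'': with it present, $B_{n,t}(\tilde g)$ does not converge in probability to the finite limit $B_t(\tilde g)$ claimed in part (b), and the covering argument in Theorem~\ref{thm:uniform_conv}, which picks $\delta(\tilde g_j,\epsilon)$ from the finite limit $B_t(\tilde g_j)$ so that $\delta(\tilde g_j,\epsilon)B_{n,t}(\tilde g_j)$ stays below $\epsilon/3$, collapses.

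The paper closes exactly this gap differently: it invokes the pointwise lower bound $K\ge c>0$ from Assumption~A4* to obtain $D_{n,t}(g)\ge c/h_n$ \emph{deterministically} and uniformly in $g$, a bound that diverges as $h_n\rightarrow 0$. The resulting $h_n^{2}/c^{2}$ from the product of the two denominators then cancels the $h_n^{-2}$ from the mean value step exactly, which is the entire reason for the otherwise odd ``bounded away from zero'' clause in A4*. A secondary point: part (a) is a pathwise inequality, whereas your construction yields it only on an event of probability tending to one, which would further complicate its deterministic use inside the supremum over the finite subcover.
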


    \noindent
    \begin{proof}[Proof.]
        Let the positive real numbers $c$, $C_{K'}$ be such that $0<c\le K(t)$ and $|K'(t)|\le C_{K'}$ (by Assumption~A4*). Therefore, $D_{n,t}(g)\ge \frac{c}{h}$. Consequently, for any given $g, \tilde{g}\in\G$, we have
        \begin{eqnarray*}
            &&\hskip-30pt |M_{n,t}(g)-M_{n,t}(\tilde{g})|\\
            &\le& \frac{h^2}{c^2}\bigg\{D_{n,t}(\tilde{g})|N_{n,t}(g)-N_{n,t}(\tilde{g})| +N_{n,t}(\tilde{g})|D_{n,t}(\tilde{g})-D_{n,t}(g)|\bigg\}.
        \end{eqnarray*}
        For an upper bound on the first term in the square bracket, note that
        $$|N_{n,t}(g)-N_{n,t}(\tilde{g})|= \frac{1}{nh}\left|\displaystyle\sum_{j=1}^{n_2}y_{2j}\left\{K\left(\frac{t-g(s_j)}{h}\right )-K\left (\frac{t-\tilde{g}(s_j)}{h}\right ) \right\}\right|.$$
        By using the Mean Value Theorem and the bound of $K'(\cdot)$, we have
        $$|N_{n,t}(g)-N_{n,t}(\tilde{g})|\le \frac{1}{n}\times\frac{C_{K'}}{h^2}\Delta(g,\tilde{g})\displaystyle\sum_{j=1}^{n_2}|y_{2j}|$$.
        Similarly,
        $$|D_{n,t}(\tilde{g})-D_{n,t}(g)|\le \frac{n_2}{n}\times\frac{C_{K'}}{h^2}\Delta(g,\tilde{g})$$.
        Therefore,
        $$|M_{n,t}(g)-M_{n,t}(\tilde{g})|\le \frac{n_2}{n}\times\frac{C_{K'}}{c^2}\left\{D_{t,x}(\tilde{g})\frac{1}{n_2}\displaystyle \sum_{j=1}^{n_2}|y_{2j}|+N_{n,t}(\tilde{g})\right\}\Delta(g,\tilde{g}).$$
        This completes the proof of part (a).

        As $s_j$'s and $\epsilon_{2j}$'s are i.i.d., the $|y_{2j}|$'s are also i.i.d.\ and by Assumptions~A1 and A2, their mean, $\mu_{|y_2|}$, is finite. Therefore, by the strong law of large numbers we have, as $n_2\rightarrow \infty$
        $\frac{1}{n_2}\sum_{j=1}^{n_2}|y_{2j}|\rightarrow \mu_{|y_2|}$ almost surely. The proof of part~(b) is then completed from Assumptions~A2, A3 and~A5 and~\eqref{eq:NnDn_conv}.
    \end{proof}

    \medskip

    \begin{proof}[Proof.]
         Suppose $\epsilon>0$. Lemma~{\ref{lem:unif_cont_of_M}} ensures that there exist $\delta_{\epsilon}>0$ such that,
        \begin{equation}
        \Delta(g,\tilde{g})<\delta_{\epsilon} \Rightarrow \Delta(M_t(g),M_t(\tilde{g}))<\epsilon\label{eq:thm:consistency_1}.
        \end{equation}
        Now consider
        \begin{eqnarray}
            &&\hskip-25pt P\{|M_{n,t}(\hat{g}_n)-M_t(g_0)|>2\epsilon\}\notag\\
            &\le & P\left\{|M_{n,t}(\hat{g}_n)-M_t(\hat{g}_n)|>\epsilon\right\}+
            P\left\{|M_t(\hat{g}_n)-M_t(g_0)|>\epsilon\right\}\notag\\
            &\le & P\left\{|M_{n,t}(\hat{g}_n)-M_t(\hat{g}_n)|>\epsilon\right\}\notag\\
            &&+P\left\{|M_t(\hat{g}_n)-M_t(g_0)|>\epsilon,\Delta(\hat{g}_n,g_0)<\delta_{\epsilon}\right\} +P\left\{\Delta(\hat{g}_n,g_0)\ge\delta_{\epsilon}\right\}.\hskip25pt\mbox{}\label{eq:thm:consistency_prob_ineq}
        \end{eqnarray}
        By observing that $$|M_{n,t}(\hat{g}_n)-M_t(\hat{g}_n)|\le\sup_{g\in\G}|M_{n,t}(g)-M_t(g)|,$$ and using Theorem~\ref{thm:uniform_conv}, the first term in~(\ref{eq:thm:consistency_prob_ineq}) can be seen to go to zero as $n\rightarrow \infty$. The second term vanishes by~(\ref{eq:thm:consistency_1}). As $\hat{g}_n$ is a consistent estimator of $g_0$, the third term also goes to zero. This completes the proof.
    \end{proof}

    \medskip


        \begin{proof}[Proof of Theorem~\ref{thm:mse_g0}.]
        Instead of directly considering the second moment of the estimation error $m_{n,g_0}(t)-m(t)$, we shall show that the expression on the right hand side of \eqref{eq:thm:mse_g0:main} is the second moment of
        \begin{equation}
            H_n(t) = \frac{r_{n,g_0}(t)-m(t)f_{n,g_0}(t)}{f_{g_0}(t)},\label{eq:Hn}
        \end{equation}
        where $r_{n,g_0}(t)$ and $f_{n,g_0}(t)$ are the numerator and denominator of $m_{n,g}(t)$ in~\eqref{eq:nw_type_functional}, respectively, for $g=g_0$, and then show that $H_n(t)$ is not very far from the estimation error.             Clearly,
        \begin{equation}
            E(H_n^2(t))=\frac{E(r_{n,g_0}^2(t))+ m^2(t)E(f_{n,g_0}^2(t))-2m(t)E(r_{n,g_0}(t)f_{n,g_0}(t))}{f_{g_0}^2(t)}.\label{eq:thm:linearization_L2:L2}
        \end{equation}
        From the expression of $r_{n,g_0}(t)$, we have
        \begin{eqnarray*}
            r_{n,g_0}^2(t)&=&\frac{1}{n^2h_n^2}\left\{\sum_{i=1}^{n_1} K^2\left(\frac{t-t_i}{h_n}\right) y_{1i}^2 + \sum_{j=1}^{n_2} K^2\left(\frac{t-g_0(s_j)}{h_n}\right) y_{2j}^2\right.\\
                &&+ \sum_{i=1}^{n_1}\sum_{\substack {i'=1\\ (\neq i)}}^{n_1} K\left(\frac{t-t_i}{h_n}\right)K\left(\frac{t-t_{i'}}{h_n}\right)y_{1i}y_{1i'}\\
                &&+ \sum_{j=1}^{n_2}\sum_{\substack {j'=1\\ (\neq j)}}^{n_2} K\left(\frac{t-g_0(s_j)}{h_n}\right) K\left(\frac{t-g_0(s_{j'})}{h_n}\right)y_{2j}y_{2j'}\\
                &&\left.+ \sum_{i=1}^{n_1}\sum_{j=1}^{n_2} K\left(\frac{t-t_i}{h_n}\right)K\left(\frac{t-g_0(s_{j})}{h_n}\right)y_{1i}y_{2j}\right\}.
        \end{eqnarray*}
        Consequently,
        $$E(r_{n,g_0}^2(t))= \frac{1}{nh_n}\frac{n_1}{n}I_1 + \frac{1}{nh_n}\frac{n_2}{n}I_2 + \frac{n_1(n_1{-}1)}{n^2}I_3^2 + \frac{n_2(n_2{-}1)}{n^2}I_4^2 + \frac{2n_1n_2}{n^2}I_3I_4,$$
        where
        \begin{eqnarray*}
            I_1&=&\int s_1^2(t-h_nu)f_1(t-h_nu)K^2(u)du,\\
            I_2&=&\int s_2^2(g_0^{-1}(t-h_nu))f_2\circ g_0^{-1}(t-h_nu)(g_0^{-1})'(t-h_nu)K^2(u)du,\\
            I_3&=&\int m(t-h_nu)f_1(t-h_nu)K(u)du,\\
            I_4&=&\int m(t-h_nu)f_2\comp g_0^{-1}(t-h_nu)(g_0^{-1})'(t-h_nu)K(u)du,\\
            s_1^2(u)&=&E(y_{1i}^2|t_i=u) \quad \mbox{and }s_2^2(u)=E(y_{2j}^2|s_j=u).
        \end{eqnarray*}
        All these integrals are functions of $t$. Similarly, it can be shown that
        \begin{eqnarray*}
            E(f_{n,g_0}^2(t))&=& \frac{1}{nh_n}\frac{n_1}{n}I_5 + \frac{1}{nh_n}\frac{n_2}{n}I_6 + \frac{n_1(n_1-1)}{n^2}I_7^2\\
                &&+ \frac{n_2(n_2-1)}{n^2}I_8^2 + \frac{2n_1n_2}{n^2}I_7I_8,\\
            E(r_{n,g_0}(t)f_{n,g_0}(t))&=& \frac{1}{nh_n}\frac{n_1}{n}I_9 + \frac{1}{nh_n}\frac{n_2}{n}I_{10} + \frac{n_1(n_1-1)}{n^2}I_3I_7 \\
                &&+\frac{n_2(n_2-1)}{n^2}I_4I_8 + \frac{n_1n_2}{n^2}I_3I_8 + \frac{n_1n_2}{n^2}I_4I_7 ,
        \end{eqnarray*}
        where
        \begin{eqnarray}
            I_5&=&\int f_1(t-h_nu)K^2(u)du,\notag\\
            I_6&=&\int f_2\comp g_0^{-1}(t-h_nu)(g_0^{-1})'(t-h_nu)K^2(u)du,\notag\\
            I_7&=&\int f_1(t-h_nu)K(u)du,\notag\\
            I_8&=&\int f_2\comp g_0^{-1}(t-h_nu)(g_0^{-1})'(t-h_nu)K(u)du,\notag\\
            I_9&=&\int m(t-h_nu)f_1(t-h_nu)K^2(u)du,\notag\\
            I_{10}&=&\int m(t-h_nu)f_2\comp g_0^{-1}(t-h_nu)(g_0^{-1})'(t-h_nu)K^2(u)du,\notag
        \end{eqnarray}
        which are also functions of $t$. Therefore, the numerator in~(\ref{eq:thm:linearization_L2:L2}),
        \begin{eqnarray}\label{eq:thm:linearization_L2:L2_num}
            &&\hskip-30pt E(r_{n,g_0}^2(t)) + m^2(t) E(f_{n,g_0}^2(t)) - 2m(t) E(r_{n,g_0}(t)f_{n,g_0}(t))\notag\\
            &=&\frac{1}{nh_n}\left\{\frac{n_1}{n}(I_1+m^2(t)I_5-2m(t)I_9) + \frac{n_2}{n}(I_2+m^2(t)I_6-2m(t)I_{10})\right\}\notag\\
            &&\quad +\left\{\frac{n_1}{n}(I_3-m(t)I_7)+\frac{n_2}{n}(I_4-m(t)I_8)\right\}^2\notag\\
            &&\quad-\frac{1}{n}\left\{\frac{n_1}{n}(I_3-m(t)I_7)^2+\frac{n_2}{n}(I_4-m(t)I_8)^2\right\}.
        \end{eqnarray}

        We first show that, $I_1$ tends to $s_1^2(t)f_1(t)\|K\|_2^2$ as $n\rightarrow\infty$. Fix $\epsilon>0$. As $s_1^2$ and $f_1$ are continuous functions, we may choose $\delta>0$ so that $$\sup_{|u|\le\delta}\left|s_1^2(t-u)f_1(t-u)-s_1^2(t)f_1(t)\right|\|K\|_2^2<\frac\epsilon3.$$  It follows that
        \begin{eqnarray*}
            &&\hskip-35pt \Big|I_1 - s_1^2(t)f_1(t)\|K\|_2^2\Big|\\
                &\!\!=&\left|\int_{-\infty}^{\infty} K^2(u)\left\{s_1^2(t-h_nu)f_1(t-h_nu)-s_1^2(t)f_1(t)\right\}du\right|\\
                &\!\!\le&\int_{|u|\le\delta}\frac{1}{h_n}K^2\left(\frac{u}{h_n}\right)\left|s_1^2(t-u)f_1(t-u)-s_1^2(t)f_1(t)\right|du\\
                    &&+\int_{|u|>\delta}\frac{1}{h_n}|u|K^2\left(\frac{u}{h_n}\right)\frac{s_1^2(t-u)f_1(t-u)}{|u|}du\\
                    &&+s_1^2(t)f_1(t)\int_{|u|>\delta}\frac{1}{h_n}K^2\left(\frac{u}{h_n}\right)du\\
                &\!\!\le&\sup_{|u|\le\delta}\left|s_1^2(t-u)f_1(t-u)-s_1^2(t)f_1(t)\right|\|K\|_2^2\\
                    &&\quad+\frac{M_KM_1}{\delta}\sup_{|u|>\frac{\delta}{h_n}}|uK(u)|+s_1^2(t)f_1(t)\int_{|u|>\frac{\delta}{h_n}}K^2(u)du,
        \end{eqnarray*}
        where $M_1$ (by Assumptions~A1$'$ and~A2$'$) and $M_K$ (by Assumption~A4$'$) are upper bounds of $E(y_{1i}^2)$ and $K$ respectively. We may select $n$ sufficiently large so that each of the last two terms in the above expression is smaller than $\frac\epsilon3$ (by Assumption~A4$'$). This proves our assertion. Thus, we can write
        $$\frac{n_1}{n}I_1 = \frac{n_1}{n}s_1^2(t)f_1(t)\|K\|_2^2+o(1)=\frac{n_1}{n}\left(\sigma_{\epsilon_1}^2+m^2(t)\right)f_1(t)\|K\|_2^2+o(1).$$
        By a similar argument,
        $$\frac{n_2}{n}I_2= \frac{n_2}{n} \left(\sigma_{\epsilon_2}^2+m^2(t)(g_0^{-1}(t))\right) f_2\circ g_0^{-1}(t)({g_0^{-1}})'(t)\|K\|_2^2+o(1).$$

        We now evaluate the integrals $I_3$ to $I_{10}$. Taylor expansions of $m(t-h_nu)$ and $f_1(t-h_nu)$ around the values of these function at $t$, up to second order derivatives, give
        \begin{eqnarray*}
            I_3&=&\int_{-\infty}^{\infty}m(t-h_nu)f_1(t-h_nu)K(u)du\\
                &=&\int_{-\infty}^{\infty}\left\{m(t)-h_num'(t)+\frac{(h_nu)^2}{2}m''(t_1)\right\}\\
                    &&\ \ \times\left\{f_1(t)-h_nuf_1'(t)+\frac{(h_nu)^2}{2}f_1''(t_2)\right\}K(u)du
        \end{eqnarray*}
        where $t_1=t_{1}(t,h_n,u,m)$ and $t_2=t_{2}(t,h_n,u,f_1)$ lie between $t$ and $t-h_nu$ and tend to $t$ as $n$ tends to infinity. Thus
        \begin{eqnarray}
            I_3&=&m(t)f_1(t)+\frac{h_n^2}{2}\left\{f_1(t)\int_{-\infty}^{\infty} m''(t_1)u^2K(u)du+2m'(t)f_1'(t)\mu_2(K)\right.\notag\\
                &&\quad\quad\left.+m(t)\int_{-\infty}^{\infty}f_1''(t_2)u^2K(u)du\right\}\notag\\
                &&\quad-\frac{h_n^3}{2}\left\{f_1'(t)\int_{-\infty}^{\infty}m''(t_1)u^3K(u)du +m'(t)\int_{-\infty}^{\infty}f_1''(t_2)u^3K(u)du\right\}\notag\\
                &&\quad+\frac{h_n^4}{4}\int_{-\infty}^{\infty}m''(t_1)f_1''(t_2)u^4K(u)du.\label{eq:thm:linearization_L2:e3n_integral}
        \end{eqnarray}
       Consider the first integral in~(\ref{eq:thm:linearization_L2:e3n_integral}). Since the kernel $K$ is assumed to be compactly supported, this integral is effectively over a finite interval. It follows from Assumption~A5$'$ and the continuity of the function $m''$ that the integrand converges to $m''(t)u^2K(u)$ . Assumptions~A2$'$~and~A4$'$~imply that the integrand is bounded by an integrable function on this support. It, therefore, follows from the Dominated Convergence Theorem that
        $$M_{2n}=\int_{-\infty}^{\infty} m''(t_1)u^2K(u)du - m''(t)\mu_2(K) = o(1).$$
        By applying a similar argument for the other integrals in~\eqref{eq:thm:linearization_L2:e3n_integral} and consolidating the terms, we obtain
        \begin{eqnarray*}
            I_3\!\!&=\!\!&m(t)f_1(t)+\frac{h_n^2}{2}\Big\{(m(t)f_1(t))''\mu_2(K)+f_1(t)M_{2n}+m(t)F_{2n}\Big\}\\
            &&\ \ -\frac{h_n^3}{2}\Big\{f_1'(t)M_{3n}+m'(t)F_{3n}\Big\}+\frac{h_n^4}{4}m''(t)f_1''(t)\int_\infty^\infty u^4K(u)du+o(h_n^4),
        \end{eqnarray*}
        where
        \begin{eqnarray}
            M_{3n} &=& \int_{-\infty}^{\infty} m''(t_1)u^3K(u)du = o(1),\notag\\
            F_{2n} &=& \int_{-\infty}^{\infty} f_1''(t_2)u^2K(u)du - f_1''(t)\mu_2(K) = o(1),\notag\\
            F_{3n} &=& \int_{-\infty}^{\infty} f_1''(t_2)u^3K(u)du = o(1).\notag
        \end{eqnarray}

        Similarly, from Assumptions~A1$'$-A5$'$\ and Taylor expansions of $m(t-h_nu)$, $f_1(t-h_nu)$, $f_2\circ g_0^{-1}(t-h_nu)$ and $(g_0^{-1})'(t-h_nu)$ around the values of these functions at $t$, up to second order derivatives, we obtain
        \begin{eqnarray*}
            I_4
                &=&m(t)f_2\circ g_0^{-1}(t)(g_0^{-1})'(t)+\frac{h_n^2}{2}\left\{(m(t)f_2\circ g_0^{-1}(t)(g_0^{-1}(t))')''\mu_2(K)\right.\\
                    &&\left.+f_2\circ g_0^{-1}(t)(g_0^{-1}(t))'M_{2n} +m(t)f_2\circ g_0^{-1}(t)G_{2n} +m(t)(g_0^{-1}(t))'F_{2n}^\ast \right\}\\
                    &&-\frac{h_n^3}{2}\left[\left\{f_2\circ g_0^{-1}(t) (g_0^{-1}(t))''+(f_2\circ g_0^{-1}(t))'(g_0^{-1}(t))'\right\}M_{3n}\right.\\
                        &&\quad +\left\{m(t)(f_2\circ g_0^{-1}(t))'+m'(t)f_2\circ g_0^{-1}(t)\right\}G_{3n}\\
                        &&\quad\left. +\left\{m(t)(g_0^{-1}(t))''+m'(t)(g_0^{-1}(t))'\right\}F_{3n}^\ast\right]\\
                    &&+\frac{h_n^4}{4}\left\{m''(t)(f_2\circ g_0^{-1}(t)(g_0^{-1}(t))')''+2m'(t)((f_2\circ g_0^{-1}(t))'(g_0^{-1}(t))'')'\right.\\
                        &&\quad\left.+m(t)(f_2\circ g_0^{-1}(t))''(g_0^{-1}(t))'''\right\}\int u^4K(u)du+o(h_n^4),\\
            I_5
                &=&f_1(t)\|K\|_2^2+O\left(h_n^2\right),\\
            I_6&=&f_2\comp g_0^{-1}(t)(g_0^{-1}(t))'\|K\|_2^2+O\left(h_n^2\right),\\
            I_7
                &=&f_1(t)+\frac{h_n^2}{2}\left\{f_1''(t)\mu_2(K)+F_{2n}\right\},\\
            I_8
                &=&f_2\comp g_0^{-1}(t)(g_0^{-1}(t))'\\
                &&+\frac{h_n^2}{2}\left\{(f_2\comp g_0^{-1}(t)(g_0^{-1}(t))')''\mu_2(K)+ f_2\comp g_0^{-1}(t)G_{2n}+(g_0^{-1}(t))'F_{2n}^\ast\right\}\\
                    &&-\frac{h_n^3}{2}\left\{(f_2\comp g_0^{-1}(t))'G_{3n}+(g_0^{-1}(t))''F_{3n}^\ast\right\}\\
                    &&+\frac{h_n^4}{4}(f_2\comp g_0^{-1}(t))''(g_0^{-1}(t))'''\int u^4K(u)du+o(h_n^4),\\
            I_9
                &=&m(t)f_1(t)\|K\|_2^2+O\left(h_n^2\right),\\
            I_{10}
                &=&m(t)f_2\comp g_0^{-1}(t) (g_0^{-1}(t))'\|K\|_2^2+O\left(h_n^2\right),
        \end{eqnarray*}
        where
        \begin{eqnarray*}
            G_{2n} &=& \int_{-\infty}^{\infty} (g_0^{-1})'''(t_2^\ast)u^2K(u)du - {g_0^{-1}}'''(t)\mu_2(K) = o(1),\\
            G_{3n} &=& \int_{-\infty}^{\infty} (g_0^{-1})'''(t_2^\ast)u^3K(u)du = o(1),\\
            F_{2n}^\ast &=& \int_{-\infty}^{\infty} (f_2 \comp g_0^{-1})''(t_1^\ast)u^2K(u)du - (f_2 \comp g_0^{-1})''(t)\mu_2(K) = o(1),\\
            F_{3n}^\ast &=& \int_{-\infty}^{\infty} (f_2 \comp g_0^{-1})''(t_1^\ast)u^3K(u)du = o(1),
        \end{eqnarray*}
        and $t_1^\ast=t_1^\ast(t,h_n,u,f_2,g_0)$ and $t_2^\ast=t_2^\ast(t,h_n,u,g_0)$ lie between $t$ and $t-h_nu$ and tend to $t$ as $n$ tends to infinity. Therefore,
        \begin{eqnarray*}
            &&\hskip-30pt\frac{1}{nh_n}\times\frac{n_1}{n}(I_1+m^2(t)I_5-2m(t)I_9)\\ &=& \frac{1}{nh_n}\times\frac{n_1}{n}f_1(t)\sigma_{\epsilon_1}^2\|K\|_2^2 + o\left(\frac{1}{nh_n}\right) + O\left(\frac{h_n}{n}\right)\\
                &=&\frac{1}{nh_n}\times\frac{n_1}{n}f_1(t)\sigma_{\epsilon_1}^2\|K\|_2^2 + o\left(\frac{1}{nh_n}\right),\\
            &&\hskip-30pt\frac{1}{nh_n}\times\frac{n_2}{n}(I_2+m^2(t)I_6-2m(t)I_{10})\\
    		&=& \frac{1}{nh_n}\times\frac{n_2}{n}f_2\comp g_0^{-1}(t)(g_0^{-1}(t))' \sigma_{\epsilon_2}^2\|K\|_2^2 + o\left(\frac{1}{nh_n}\right).
        \end{eqnarray*}
        Consequently, by Assumption~A5$'$,
        \begin{eqnarray}\label{eq:thm:linearization_L2:L2_num:1}
            &&\hskip-30pt \frac{1}{nh_n}\left\{\frac{n_1}{n}(I_1+m^2(t)I_5-2m(t)I_9) + \frac{n_2}{n}(I_2+m^2(t)I_6-2m(t)I_{10})\right\}\notag\\
            &&=\frac{\xi f_1(t)\sigma_{\epsilon_1}^2+ (1-\xi) f_2\comp g_0^{-1}(t)(g_0^{-1}(t))'\sigma_{\epsilon_2}^2}{nh_n}\|K\|_2^2 + o\left(\frac{1}{nh_n}\right).
        \end{eqnarray}
        From the expressions of $I_3$, $I_4$, $I_7$, and $I_8$ above, we have
        \begin{eqnarray*}
            &&\hskip-25pt I_3-m(t)I_7\\
    		&=& \frac{h_n^2}{2}\left\{\left((m(t)f_1(t))'' - m(t)f_1''(t)\right)\mu_2(K) + f_1(t)M_{2n} \right\}\\
                        &&- \frac{h_n^3}{2}\left\{f_1'(t)M_{3n} + m'(t)F_{3n}\right\} + \frac{h_n^4}{4}m''(t)f_1''(t)\int u^4K(u)du+ o(h_n^4),\\
                    &&\hskip-25pt I_4-m(t)I_8\\
    		&=& \frac{h_n^2}{2}\Bigg\{\left((m(t)f_2\comp g_0^{-1}(t)(g_0^{-1}(t))')'' - m(t)\left(f_2\comp g_0^{-1}(t)(g_0^{-1}(t))'\right)''\right)\mu_2(K)\\
                &&\qquad + f_2\comp g_0^{-1}(t)(g_0^{-1}(t))'M_{2n} \Bigg\}\\
            && -\frac{h_n^3}{2}\bigg\{m'(t)f_2\comp g_0^{-1}(t)G_{3n} + m'(t)(g_0^{-1}(t))'F_{3n}^\ast\notag\\
                &&\qquad +\left(f_2\comp g_0^{-1}(t)(g_0^{-1}(t))'' + \left(f_2\comp g_0^{-1}(t)\right)'(g_0^{-1}(t))'\right)M_{3n}\bigg\}\\
            && + \frac{h_n^4}{4}\left\{m''(t)\left(f_2\comp g_0^{-1}(t)(g_0^{-1}(t))'\right)'' + 2m'(t)\left(\left(f_2\comp g_0^{-1}(t)\right)'(g_0^{-1}(t))''\right)'\right\}\\
            &&\qquad \times\int u^4K(u)du+ o(h_n^4).
        \end{eqnarray*}
        Therefore,
        \begin{eqnarray}\label{eq:thm:linearization_L2:L2_num:2}
            &&\hskip-25pt \left\{\frac{n_1}{n}(I_3-m(t)I_7)+\frac{n_2}{n}(I_4-m(t)I_8)\right\}^2\notag\\
            &&=\frac{h_n^4}{4}\Bigg\{\frac{n_1}{n}\left(\left(m(t)f_1(t)\right)'' - m(t)f_1''(t)\right)\notag\\
                &&\quad + \frac{n_2}{n}\left(\left(m(t)f_2\comp g_0^{-1}(t)(g_0^{-1}(t))'\right)'' - m(t)\left(f_2\comp g_0^{-1}(t)(g_0^{-1}(t))'\right)''\right)\Bigg\}^2\notag\\
		        &&\qquad\times\mu^2(K)+o(h_n^4)\notag\\
                &&=\frac{h_n^4}{4}\Bigg\{\frac{n_1}{n}\left(m''(t)f_1(t) + 2m'(t)f_1'(t)\right)\notag\\
                    &&\quad + \frac{n_2}{n}\left(m''(t)f_2\comp g_0^{-1}(t)(g_0^{-1}(t))' + 2 m'(t)\left(f_2\comp g_0^{-1}(t)(g_0^{-1}(t))'\right)'\right)\Bigg\}^2\notag\\
		&&\qquad\times\mu^2(K)+o(h_n^4)\notag\\
                &&=\frac{h_n^4}{4}\Bigg\{m''(t)\left(\frac{n_1}{n}f_1(t)+\frac{n_2}{n}f_2\comp g_0^{-1}(t)(g_0^{-1}(t))'\right)\notag\\
                    &&\quad + 2m'(t)\left(\frac{n_1}{n}f_1'(t)+\frac{n_2}{n}\left(f_2\comp g_0^{-1}(t)(g_0^{-1}(t))'\right)'\right)\Bigg\}^2\mu^2(K)+o(h_n^4)\notag\\
                &&=\frac{h_n^4}{4}\left\{m''(t)f_{g_0}(t) + 2 m'(t)f_{g_0}'(t) + o(1) \right\}^2\mu^2(K)+o(h_n^4)\notag\\
                &&=\frac{h_n^4}{4}f_{g_0}^2(t)\left(m''(t) + 2 \frac{m'(t)f_{g_0}'(t)}{f_{g_0}(t)}\right)^2\mu^2(K)+o(h_n^4).
        \end{eqnarray}
        Further,
        \begin{eqnarray*}
                    (I_3 - m(t)I_7)^2
    		&=& \frac{h_n^4}{4}\left\{\left(\left(m(t)f_1(t)\right)'' - m(t)f_1''(t)\right)\right\}^2\mu_2^2(K) + o(h_n^4),\\
                    (I_4 - m(t)I_8)^2
    		&=& \frac{h_n^4}{4}\left\{\left(m(t)f_2\comp g_0^{-1}(t)(g_0^{-1}(t))'\right)''\right.\\
    		&&\qquad\left. - m(t)\left(f_2\comp g_0^{-1}(t)(g_0^{-1}(t))'\right)''\right\}^2\mu_2^2(K) + o(h_n^4),
        \end{eqnarray*}
        and, consequently,
        \begin{equation}\label{eq:thm:linearization_L2:L2_num:3}
            \frac{1}{n}\left\{\frac{n_1}{n}(I_3-m(t)I_7)^2+\frac{n_2}{n}(I_4-m(t)I_8)^2\right\} = o(h_n^4).
        \end{equation}

        By combining~(\ref{eq:thm:linearization_L2:L2_num:1}--\ref{eq:thm:linearization_L2:L2_num:3}), (\ref{eq:thm:linearization_L2:L2_num}) and~(\ref{eq:thm:linearization_L2:L2}), we obtain
        \begin{eqnarray}\label{eq:Hn2moment}
           E(H_n^2(t))&\!\!=&\!\!\frac{\xi f_1(t)\sigma_{\epsilon_1}^2+ (1-\xi) f_2\circ g_0^{-1}(t)(g_0^{-1})'(t)\sigma_{\epsilon_2}^2}{nh_n}\times\frac{\|K\|_2^2}{f_{g_0}^2(t)}\notag\\
            &&\!\!+\frac{h_n^4}{4}\left[m''(t)+\frac{2m'(t)f_{g_0}'(t)}{f_{g_0}(t)}\right]^2\mu_2^2(K)+o\left(h_n^4+\frac{1}{nh_n}\right),\quad\mbox{}
        \end{eqnarray}
        as claimed at the beginning of this proof.

        We now return to the estimation error of $ m_{n,g_0}(t)$. Let us write it as
        \begin{eqnarray*}
            m_{n,g_0}(t)-m(t)&=&\frac{r_{n,g_0}(t)-m(t)f_{n,g_0}(t)}{f_{g_0}(t)}\\
&&\qquad+(m_{n,g_0}(t)-m(t))\left(\frac{f_{g_0}(t)-f_{n,g_0}(t)}{f_{g_0}(t)}\right).
        \end{eqnarray*}
        This decomposition implies
        \begin{eqnarray*}
           &&\hskip-30pt |m_{n,g_0}(t)-m(t)-H_n(t)|\\
&=&\left|m_{n,g_0}(t)-m(t)-\frac{r_{n,g_0}(t)-m(t)f_{n,g_0}(t)}{f_{g_0}(t)}\right|\\
	 &=&\frac{1}{f_{g_0}(t)}\left|(m_{n,g_0}(t)-m(t))(f_{n,g_0}(t)-f_{g_0}(t))\right|\\
            &\le&\frac{1}{2f_{g_0}(t)}\left\{(m_{n,g_0}(t)-m(t))^2+(f_{n,g_0}(t)-f_{g_0}(t))^2\right\},
        \end{eqnarray*}
        and consequently,
        \begin{eqnarray}
            &&\hskip-30pt E|m_{n,g_0}(t)-m(t)-H_n(t)|\notag\\
	&\le& \frac{1}{2f_{g_0}(t)}\left\{E(m_{n,g_0}(t)-m(t))^2+E(f_{n,g_0}(t)-f_{g_0}(t))^2\right\}.\label{eq:thm:linearization_order:1}
        \end{eqnarray}
        In order to evaluate $E(m_{n,g_0}-m)^2$  in~\eqref{eq:thm:linearization_order:1}, let us write
        $$m_{n,g_0}(t)=\frac{r_{n,g_0}(t)}{f_{g_0}(t)}+\frac{1}{f_{g_0}(t)}\times\frac{r_{n,g_0}(t)}{f_{n,g_0}(t)}(f_{g_0}(t)-f_{n,g_0}(t)).$$
        Consequently,
        \begin{eqnarray*}
           &&\hskip-30pt  (m_{n,g_0}(t)-m(t))^2\\
	 &\le&\frac{1}{f_{g_0}^2(t)}\bigg\{(r_{n,g_0}(t)-m(t)f_{g_0}(t))^2+\left(\frac{r_{n,g_0}(t)}{f_{n,g_0}(t)}\right)^2(f_{n,g_0}(t)-f_{g_0}(t))^2\\
            &&\quad +2\left|\frac{r_{n,g_0}(t)}{f_{n,g_0}(t)}\right||(r_{n,g_0}(t)-m(t)f_{g_0}(t))(f_{n,g_0}(t)-f_{g_0}(t))|\bigg\}.
        \end{eqnarray*}
        It follows from Assumptions~A1$'$ and~A2$'$\ that there is $C>0$ such that $|r_{n,g_0}(t)/f_{n,g_0}(t)|\le C$ almost surely. Therefore, we can simplify the above bound as
        \begin{eqnarray*}
           &&\hskip-30pt (m_{n,g_0}(t)-m(t))^2\\
     &\le&\frac{1}{f_{g_0}^2(t)}\big\{(r_{n,g_0}(t)-m(t)f_{g_0}(t))^2+C^2(f_{n,g_0}(t)-f_{g_0}(t))^2\\
            &&\quad +2C|(r_{n,g_0}(t)-m(t)f_{g_0}(t))(f_{n,g_0}(t)-f_{g_0}(t))|\big\}\\
            &\le&\frac{1}{f_{g_0}^2(t)}\left[(r_{n,g_0}(t)-m(t)f_{g_0}(t))^2+C^2(f_{n,g_0}(t)-f_{g_0}(t))^2\right.\\
            &&\quad \left.+C\{(r_{n,g_0}(t)-m(t)f_{g_0}(t))^2+(f_{n,g_0}(t)-f_{g_0}(t))^2\}\right]\\
            &=&\frac{1}{f_{g_0}^2(t)}\{C^\ast(r_{n,g_0}(t)-m(t)f_{g_0}(t))^2+C^{\ast\ast}(f_{n,g_0}(t)-f_{g_0}(t))^2\},
        \end{eqnarray*}
        where $C^\ast=1+C$ and $C^{\ast\ast}=C(1+C)$. Therefore,
        \begin{eqnarray*}
            &&\hskip-30pt E(m_{n,g_0}(t)-m(t))^2\\
&\le&\frac{1}{f_{g_0}^2(t)}\{C^\ast E(r_{n,g_0}(t)-m(t)f_{g_0}(t))^2+C^{\ast\ast}E(f_{n,g_0}(t)-f_{g_0}(t))^2\}.
        \end{eqnarray*}
        By substituting this expression in~\eqref{eq:thm:linearization_order:1}, we have
        \begin{eqnarray}
            &&\hskip-30pt E\left(|m_{n,g_0}(t)-m(t)-H_n(t)|\right)\notag\\
            &&\le C_1^\ast E(r_{n,g_0}(t)-m(t)f_{g_0}(t))^2+C_1^{\ast\ast}E(f_{n,g_0}(t)-f_{g_0}(t))^2, \label{eq:thm:linearization_order:2}
        \end{eqnarray}
        where
        \begin{eqnarray}
            C_1^\ast=\frac{1}{2f_{g_0}(t)}\frac{C^\ast}{f_{g_0}^2(t)};\quad  C_1^{\ast\ast}=\frac{1}{2f_{g_0}(t)}\left(\frac{C^{\ast\ast}}{f_{g_0}^2(t)}+1\right).\notag
        \end{eqnarray}

        We now have to calculate the orders of $E(r_{n,g_0}(t)-m(t)f_{g_0}(t))^2$ and $E(f_{n,g_0}(t)-f_{g_0}(t))^2$ on the right hand side of~\eqref{eq:thm:linearization_order:2}. It follows from the expression of $r_{n,g_0}(t)$ that
        \begin{eqnarray}
            E\left(r_{n,g_0}(t)\right)&=&\frac{n_1}{n}\int K(u)m(t-h_nu)f_1(t-h_nu)du\notag\\
                &&+\frac{n_2}{n}\int K(u)m(t-h_nu)f_2\comp g_0^{-1}(t-h_nu)(g_0^{-1})'(t-h_nu)du.\notag
        \end{eqnarray}
        The first order Taylor expansions of $m(t-h_nu)$, $f_1(t-h_nu)$, $f_2\comp g_0^{-1}(t-h_nu)$, and $(g_0^{-1})'(t-h_nu)$ around the values of these functions at $t$, together with Assumptions~A1$'$--A4$'$\ and the dominated convergence theorem, give
        \begin{eqnarray}
            E\left(r_{n,g_0}(t)\right)&=&\frac{n_1}{n}m(t)f_1(t) + \frac{n_2}{n} m(t)f_2\comp g_0^{-1}(t)(g_0^{-1}(t))' + o(h_n).\notag
        \end{eqnarray}
        which, by Assumption~A5$'$\ can be expressed as
        $$E(r_{n,g_0}(t)) - m(t)f_{g_0}(t) = o(h_n).$$
        Further, it transpires from the proof of Theorem~\ref{thm:pointwise_conv} (see discussion preceding \eqref{eq:VNn_conv}) that
       $$V(r_{n,g_0}(t)) = V(N_{n,t}(g_0)) = E(T_{1n})+E(T_{2n})-\frac1{n_1}E_{1n}^2-\frac1{n_2}E_{2n}^2,$$ where the first two terms are $O(\frac1{nh_n})$ and the last two are $O(\frac1n)$. Therefore,
       $$V(r_{n,g_0}(t)) = O\left(\frac1{nh_n}\right) = o\left(\frac{1}{\sqrt{nh_n}}\right).$$
        By putting the expressions of $E(r_{n,g_0}(t))$ and $V(r_{n,g_0}(t))$ together, we have
        \begin{eqnarray}
            E(r_{n,g_0}(t)-m(t)f_{g_0}(t))^2=o\left(h^2+\frac{1}{\sqrt{nh_n}}\right).\label{eq:thm:linearization_order:4}
        \end{eqnarray}
        Similarly, it follows from the expression of $f_{n,g_0}(t)$ that
        \begin{eqnarray}
            E(f_{n,g_0}(t)) - f_{g_0}(t) &=& o(h_n),\notag\\
            V(f_{n,g_0}(t))&=&o\left(\frac{1}{\sqrt{nh_n}}\right),\notag\\
                E(f_{n,g_0}(t)-f_{g_0}(t))^2&=&o\left(h_n^2+\frac{1}{\sqrt{nh_n}}\right).\label{eq:thm:linearization_order:3}
        \end{eqnarray}
        By combining~\eqref{eq:thm:linearization_order:2}--\eqref{eq:thm:linearization_order:3}, we have
        \begin{eqnarray}
            E\left(|m_{n,g_0}(t)-m(t)-H_n(t)|\right)=o\left(h_n^2+\frac{1}{\sqrt{nh_n}}\right).\notag
        \end{eqnarray}
        It follows from Markov's inequality that
        \begin{eqnarray}\label{eq:thm:linearization_order:main}
            |m_{n,g_0}(t)-m(t)-H_n(t)| =o_P\left(h_n^2+\frac{1}{\sqrt{nh_n}}\right).
        \end{eqnarray}
         From \eqref{eq:Hn2moment} and Chebyshev's inequality, we have $H_n(t)=O_P(b_n)$, while \eqref{eq:thm:linearization_order:main} can be rewritten as $m_{n,g_0}(t)-m(t)=H_n(t)+o_P(a_n)$, where
        $$a_n=h_n^2+\frac{1}{\sqrt{nh_n}};\quad b_n^2=h_n^4+\frac{1}{nh_n}.$$
        The above facts imply
        \begin{eqnarray*}
           &&\hskip-30pt (m_{n,g_0}(t)-m(t))^2\ =\ H_n^2(t) + o_P^2(a_n) + O_P(b_n)o_P(b_n)\\
            &=&H_n^2(t) + a_n^2o_P^2(1) + a_nb_nO_P(1)o_P(1)\
            =\ H_n^2(t) + a_n^2o_P(1) + a_nb_no_P(1).
        \end{eqnarray*}
        Consequently,
        $$MSE(m_{n,g_0}(t))=E\left(H_n^2(t)\right)+o(a_n^2+a_nb_n).$$
        Since $a_n^2 > b_n^2$, we have $a_n^2\ge a_nb_n$, and therefore
        $$o\left(a_n^2+a_nb_n\right) = o\left(a_n^2\right) = o\left(h_n^4+\frac{1}{nh_n}+h_n^2\frac{1}{\sqrt{nh_n}}\right) = o\left(h_n^4+\frac{1}{nh_n}\right).$$
        This finding, together with \eqref{eq:Hn2moment}, completes the proof.
    \end{proof}

\bibliographystyle{agsm}
\bibliography{C:/dbhaumik/MyThesis/mybib}
\end{document}